\newcommand{\ie}{{\it i.e.}}
\def\papernumber #1 raised #2 {
\vspace{-#2}
\vbox to 0pt{\hfill\framebox{\bf Paper Number #1}}
\vspace{#2}
}
\newsavebox{\savepar}
\newcommand{\squishlist}{
  \begin{list}{$\bullet$}
   {
     \setlength{\itemsep}{0mm}
     \setlength{\topsep}{3pt}
     \setlength{\leftmargin}{1.0em}
     \setlength{\labelwidth}{1em}
     \setlength{\labelsep}{0.5em} } }
\newcommand{\squishend}{
   \end{list}  }
\newcommand{\deltu}
{\mbox{$\mathrel{\hbox{$\bigtriangleup$\raise1.5pt\hbox{\hskip-6.5pt
{\tiny $\mu$}}}}$\,}}
\newcommand{\delu}
{\mbox{$\mathrel{\hbox{$\bigtriangledown$\raise2pt\hbox{\hskip-6.5pt
{\tiny $\mu$}}}}$\,}}
\newcommand{\monus}{\mbox{$\mathrel{\hbox{$-$\raise2pt\hbox{\hskip
-5.5pt$\cdot$}}}$\,}}
\newcommand{\semijoin}{\mbox{$\mathrel{\raise1pt\hbox{\vrule height5pt
depth0pt\hskip-1.5pt$>$\hskip -2.5pt$<$}}$}}
\newbox\ProofSym
\begin{document}
%
%\mainmatter              % start of the contributions
%
\title{Spatial Skyline Queries:\\An Efficient Geometric Algorithm}
\titlerunning{Spatial Skyline Queries}  % abbreviated title (for running head)
%                                     also used for the TOC unless
%                                     \toctitle is used
%
\author{Wanbin Son \and Mu-Woong Lee \and Hee-Kap Ahn \and Seung-won Hwang}
\authorrunning{Wan-Bin Son et al.}   % abbreviated author list (for running head)
%
%%%% list of authors for the TOC (use if author list has to be modified)
\tocauthor{Wan-Bin Son, Mu-Woong Lee, Hee-Kap Ahn, Seung-won Hwang}
\institute{Pohang University of Science and Technology, Korea\\
\email{\{mnbiny, sigliel, heekap, swhwang@postech.ac.kr\}}}

\maketitle              % typeset the title of the contribution

\newtheorem{assumption}{Assumption}

\newcommand{\comment}[1]{\textcolor{red}{\uppercase{\# #1 \#}}}
\renewcommand{\comment}[1]{}

\newcommand{\dist}[1]{\ensuremath{d(#1)}}
\newcommand{\CH}[1]{\ensuremath{{\cal CH}(#1)}}
\newcommand{\A}{\ensuremath{{\cal A}}}
\newcommand{\VD}[1]{\ensuremath{\mathrm{Vor}(#1)}}
\newcommand{\VC}[1]{\ensuremath{{\cal V}(#1)}}
\newcommand{\rtree}{{R$^{\ast}$-tree}}
\newcommand{\vstwo}{\emph{VS$^{2}$}}
\newcommand{\esky}{\emph{ES}}
\newcommand{\asky}{\emph{AS}}

\newcommand{\argmax}{\operatornamewithlimits{argmax}} 
\newcommand{\argmin}{\operatornamewithlimits{argmin}} 

\begin{abstract}

  As more data-intensive applications emerge, advanced retrieval
  semantics, such as ranking or skylines, have attracted attention.
  Geographic information systems are such an application with massive
  spatial data.  Our goal is to efficiently support % exact and
 % approximate 
  skyline queries over massive spatial data. To achieve
  this goal, we first observe that % the drawbacks of 
  the best known algorithm \vstwo, despite its claim, % an existing approach, which 
  may fail to deliver correct results.  In
  contrast, we present a simple and efficient algorithm
  % analytically show our proposed algorithm can efficiently
  that computes the correct results.
  % in $O(|P|(|S|log|CH(Q)|+ log|P|))$
%  We also propose a fast approximation algorithm that returns
%  a desirable subset of the skyline results.
  To validate the effectiveness and efficiency
  of our algorithm, we provide an extensive empirical comparison of
  our algorithm and \vstwo\ in several aspects.

\end{abstract}

\section{Introduction}\label{sec:intro}

With the advent of data-intensive applications, advanced query semantics,
which enable efficient and intelligent access to a large scale data, have been
actively studied lately.  Geographic information systems (GIS) are such an
application, which aims at supporting efficient access to massive spatial data,
as Example \ref{ex:dr} illustrates.

\begin{figure}[b]
	\centering
	\begin{tabular}{c c c}
		\epsfig{file=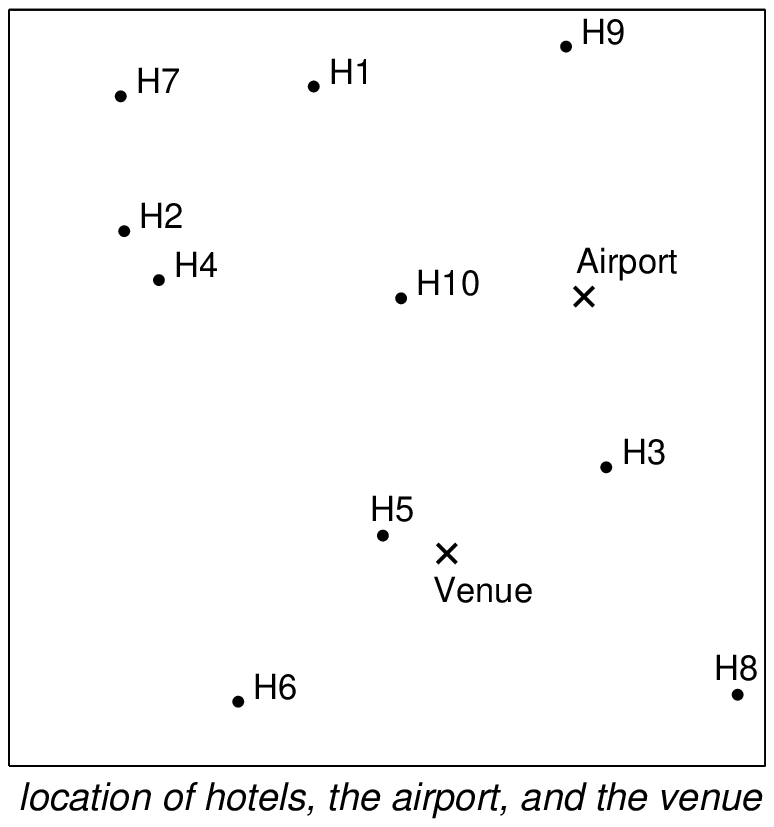, height=48mm, clip=} & \ \ &
		\epsfig{file=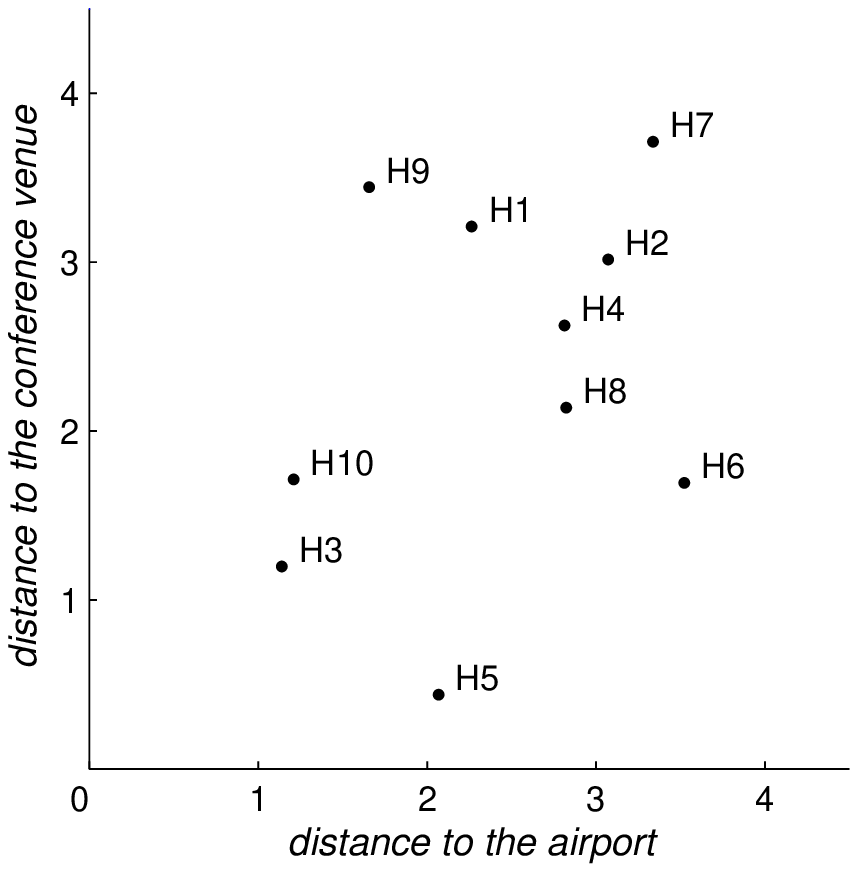, height=48mm, clip=}
	\end{tabular}
	\caption{Hotel search scenario}
	\label{fig:ex}
\end{figure}

\begin{example}\label{ex:dr}
Consider a hotel search scenario for a business trip to San Francisco, where
the user marks two locations of interest, e.g., the conference venue and an 
airport, as Fig.~\ref{fig:ex}(left) illustrates.  Given these two query 
locations, it would
be interesting to identify hotels that are close to both locations.  To better
illustrate this problem, Fig.~\ref{fig:ex}(right) % replots the near by hotels
rearranges the hotels
with respect to the distance to each query point.  From this figure, we can
claim that hotel H3 is more desirable than H10, because H3 is closer to 
both query points than H10 is. Such advanced retrieval, by \emph{ranking} 
the hotels using the aggregate distance to the given query points, or 
by finding \emph{skyline} hotels, % that are not farther/ closer 
%to some query point than any other hotel in the dataset is
will enable intelligent access to the underlying hotel datasets.
\end{example}

In particular, this paper focuses on  supporting \emph{skyline queries}
\cite{Kung75,borzsonyi01,Tan01,Papadias03,Chomicki03} to identify the objects
that are ``not dominated'' by any other objects, \ie, no other object is 
closer to all the given query points simultaneously. 
For instance, in Fig.~\ref{fig:ex}(right), H3 is a
skyline object, while H10 is dominated by H3 %, which is closer to all query
% points, 
and does not qualify as a skyline object. 

Skyline queries have gained attention lately, as formulating such queries
is highly intuitive, compared to ranking where users are required to identify
ideal distance functions to minimize.
However, most of existing skyline algorithms have not been devised for spatial
data and thus do not consider spatial relationships between objects.

Our goal is to efficiently support skyline queries over spatial data. 
This problem has already been studied by Sharifzadeh and 
Shahabi~\cite{Sharifzadeh} and they presented two algorithms for the 
problem, one of which, \vstwo, is known to be the most efficient solution
thus far. We claim, however, that \vstwo\ may fail to 
identify the correct results. In a clear contrast, we propose an algorithm 
for the problem that can identify the exact results in
$O(|P|(|S|\log|\ensuremath{{\cal CH}(Q)}|+ \log|P|))$ time, 
for the given set $P$ of data points, set $Q$ of query points, set $S$ of
spatial skylines, and the \emph{convex hull} of $Q$, denoted by
\ensuremath{{\cal CH}(Q)}.

%Meanwhile, as the size of skyline queries is typically large, especially for
%datasets with high dimensionality,  \ie, \emph{curse of dimensionality}, we
%also propose efficient approximation algorithm retrieving a subset of the exact
%results with significantly less cost, \ie, $O((|S|+|\CH{Q}|)\log|P|)$.  Our
%proposed algorithm controls the quality of approximate results by approximating
%the ``most representative" \cite{Lin07} subset.

Our contributions can be summarized as follows:
\begin{itemize}
\item We study the spatial skyline query processing problem, which enables
intelligent and efficient access to massive spatial data.

\item We show that the best known algorithm is incomplete in the sense
  that it may not return all the skyline points. 

\item We propose a novel and correct spatial skyline query processing 
  algorithm and analyze its complexity.
  % and the quality of the results.

\item We extensively evaluate our framework using synthetic data and
  validate its effectiveness. 
  % and efficiency.
\end{itemize}

The remainder of this paper is organized as follows. In
Section~\ref{sec:related}, we provide a brief survey on related work.
% briefly review related work.  
In Section~\ref{sec:prel}, we observe the drawbacks in the best known
algorithm as preliminaries and propose a new algorithm in 
Section~\ref{sec:algo}.
% and Section~\ref{sec:subset} respectively. 
Section \ref{sec:imp} discusses the details of our implementation of the
proposed algorithm.  In Section~\ref{sec:exp}, we report our evaluation
results. % Lastly, we conclude this paper in Section~\ref{sec:con}. 

\section{Related Work}\label{sec:related}

This section provides a brief survey on work related to (1) skyline
query processing and (2) spatial query processing.

{\bf Skyline computation:} Skyline queries were first
studied as maximal vectors in~\cite{Kung75}. Later,
B{\"o}rzs{\"o}nyi at el.~\cite{borzsonyi01} introduced skyline
queries in database applications. A number of
different algorithms for skyline computation have been proposed. 
For example, Tan et al. \cite{Tan01} (progressive skyline computation 
using auxiliary structures), Kossmann et al.~\cite{Kossmann02} 
%improved 
%divide-and-conquer (D\&C) algorithm, and
%proposed to use 
(nearest neighbor algorithm for skyline query processing),
%Similarly, 
Papadias et al.~\cite{Papadias03} (branch and bound skyline (BBS) algorithm),
%which achieves I/O optimal property. 
%Meanwhile, 
Chomicki et al.~\cite{Chomicki03} (sort-filter-skyline (SFS) algorithm
leveraging pre-sorting lists), and Godfrey et al. \cite{Godfrey05}
(linear elimination-sort for skyline (LESS) algorithm with
attractive average-case asymptotic complexity). Recently, there have
been active research efforts to address the ``curse of dimensionality"
problem of skyline queries \cite{ChanED06,Chan06,Lin07} using
inherent properties of skylines such as \emph{skyline frequency},
\emph{k-dominant skylines}, and \emph{k-representative skylines}.
All these efforts, however, do not consider
spatial relationships between data objects.

{\bf Spatial query processing:}
The most extensively studied spatial query mechanism is ranking
the neighboring objects by the distance to the single query point
\cite{Rousso,Berchtold,Bohm}.
For multiple query points,
Papadias et al. \cite{Papadias} studied ranking by the ``aggregate"
distance, for a class of monotone functions
aggregating the distances to multiple query points.
As these nearest neighbor queries require distance function,
which is often cumbersome to define,
another line of research studied skyline query semantics which
do not require such functions.
For a spatial skyline query with a single query point, 
Huang and Jensen~\cite{Huang}
studied the problem of finding spatial locations that are not
dominated with respect to the \emph{network distance} to the query point.
%by the network and detour distance to the given query point.
For such query with multiple query points, 
Sharifzadeh and Shahabi~\cite{Sharifzadeh}
proposed two algorithms that identify the skyline locations to the 
given query points such that no other location is closer to all query points.
While the proposed problem enables intelligent access to
spatial data, we later show that the solution proposed 
in~\cite{Sharifzadeh} is incorrect.
In contrast, this paper presents a correct exact algorithm.

\section{Preliminaries}\label{sec:prel}
In this section, we introduce some geometric concepts
(Section~\ref{subsec:convex} and \ref{subsec:voro}),
and define our problem (Section~\ref{subsec:pd}).
Then we discuss how the best known algorithm fails to identify the exact answers
(Section~\ref{subsec:naive}).

%--------------------------------------------------------------------------
\subsection{Convex Hull}\label{subsec:convex}
\comment{Describe Convex hull briefly. Lower chain? Upper chain?}

A subset $S$ of the plane is \emph{convex} if and only if for every
two points $p,q \in S$ the whole line segment $\overline{pq}$ is
contained in $S$.  The \emph{convex hull} \CH{S} of a set $S$ is the
intersection of all convex sets that contains $S$~\cite{CG}.  The
\emph{upper chain} of \CH{S} is the part of the boundary of \CH{S} 
from the leftmost point to the rightmost point in clockwise order. 
The \emph{lower chain} is the part of the boundary of \CH{S} 
from the rightmost point to the leftmost point in counterclockwise order. 
% example figure?

%--------------------------------------------------------------------------
\subsection{Voronoi Diagram and Delaunay Graph}\label{subsec:voro}
\comment{Describe voronoi diagram briefly.}

For a set $P$ of $n$ distinct points in the plane, the Voronoi diagram of $P$,
denoted by \VD{P}, is the subdivision of the plane into $n$
cells~\cite{CG} . Each cell contains only one point of $P$, which is 
called the \emph{site} of the cell. Any point $q$ in a cell
is closer to the site of the cell than any other site.
% example?
The Delaunay graph of a point set $P$ is the dual graph of the Voronoi
diagram of $P$~\cite{CG}. Two points of $P$ have an edge in the Delaunay
graph if and only if the Voronoi cells of these points share an edge
in \VD{P}.

%--------------------------------------------------------------------------
\subsection{Problem Definition}\label{subsec:pd}
In the spatial skyline query problem, we are given two point sets: one is a set
$P$ of data points, and the other is a set $Q$ of query points.  The points in
$P$ and $Q$ have $d$-dimensional coordinate attributes in $\mathbb{R}^{d}$
space.  The distance function $\dist{p,q}$ returns the Euclidean distance
between a pair of points $p$ and $q$, which obeys the triangle inequality.
Before we set the goal of the problem, we need the following definitions.  

\begin{definition}\label{def:dominate}
  We say that $p_1$ \emph{spatially dominates} $p_2$ if and only if
  $\dist{p_1,q} \leq \dist{p_2,q}$ for every $q\in Q$, and
  $\dist{p_1,q'} < \dist{p_2,q'}$ for some $q'\in Q$.
\end{definition}

\begin{definition}\label{def:skyline}
  A point $p\in P$ is a \emph{spatial skyline point} with respect to
  $Q$ if and only if $p$ is not spatially dominated by any other point
  of $P$.
\end{definition}

\noindent The goal of the problem is to retrieve all the spatial skyline points
from $P$ with respect to $Q$. We denote by $S$ the set of spatial
skyline points of $P$

%--------------------------------------------------------------------------
\subsection{Existing Approaches}\label{subsec:naive}
Though there is a lot of work on skyline queries in literature,
little has been known on the skyline queries for spatial data.
Recently, Sharifzadeh and Shahabi~\cite{Sharifzadeh} studied the
spatial skyline query problem and proposed two algorithms that compute $S$:
Branch-and-Bound Spatial Skyline Algorithm (B$^{2}$S$^{2}$) and
Voronoi-based Spatial Skyline Algorithm (\vstwo).

In \vstwo, they employed two well-known geometric structures, the
\emph{Voronoi diagram} of $P$ and the
\emph{convex hull} of $Q$, and claimed that these
structures reflect the spatial dominance to some extent, and therefore
the algorithm efficiently computes $S$.  In fact, their experiments
show that \vstwo\ runs $2\sim 3$ times faster than B$^{2}$S$^{2}$,
and \vstwo\ is known to be the most efficient solution thus far.

\comment{brief algorithm of \vstwo. which part of the algorithm is incomplete.}
% is it enough?

\vstwo, however, may fail to find all the spatial skyline points:
In Lemma 4 of ~\cite{Sharifzadeh}, to verify \vstwo~they claimed that, 
for some $p\in P$, if all
its Voronoi neighbors and all their Voronoi neighbors are spatially
dominated by other points, $p$ is not a spatial skyline. Therefore
\vstwo\ simply marks $p$ as \emph{dominated} and does not consider
it afterwards. But this is not necessarily true.

\begin{figure}[ht]
  \begin{center}
    \epsfig{file=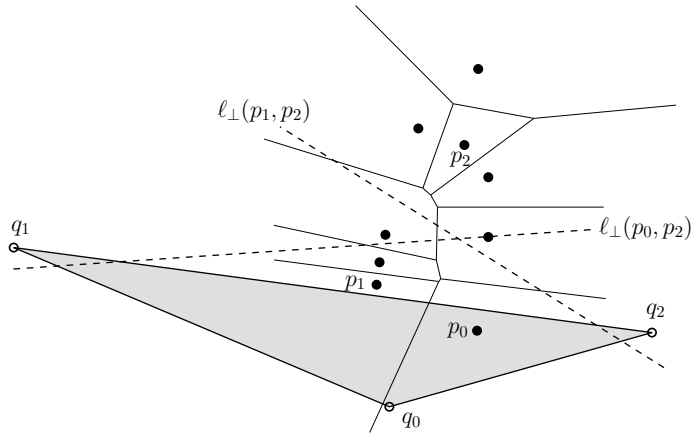, width=0.75\linewidth}
    \caption{VS$^2$ fails to find $p_2$ even though $p_2$ is a spatial
      skyline point}
    \label{fig:vs2_dominance}
  \end{center}
\end{figure}

Fig.~\ref{fig:vs2_dominance} shows a counter example to their claim.
There are $3$ query points ($q_0, q_1, q_2$) and $9$ data points.
Note that all the data points, except three ($p_0, p_1$ and $p_2$),
are spatially dominated by $p_0$ or $p_1$.  That is, all the Voronoi
neighbors of $p_2$ are spatially dominated, and \vstwo\ thus simply
marks $p_2$ as ``dominated'' and does not consider it again.
However, in fact, $p_2$ is a spatial skyline point, as the
\emph{bisector}  $\ell_\perp(p_1,p_2)$ of $p_1$ and $p_2$, \ie, a
perpendicular line to the line segment $\overline{pq}$, intersects
$\CH{Q}$.  This implies that there is a query point ($q_2$) closer to
$p_2$ and therefore $p_2$ is not spatially dominated by $p_1$, as we
will discuss more formally later in Lemma
\ref{lemma:geometric-property}.  Similarly, $p_2$ is not spatially
dominated by $p_0$, because $\ell_\perp(p_0,p_2)$ intersects
$\CH{Q}$. Since every bisecting line of $p_2$ and other points
intersects $\CH{Q}$, we conclude that $p_2$ is a spatial skyline
point.

\comment{a large scale example of VS2 failures.}

\begin{figure}[ht]
  \begin{center}
    \epsfig{file=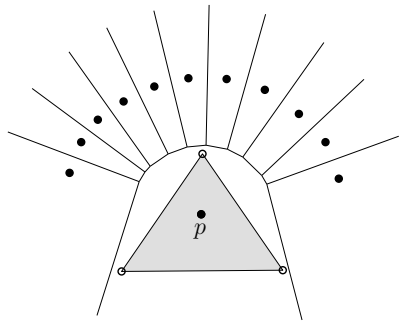, width=.43\linewidth}
    \caption{A point can have many neighbors}\label{fig:vs2_complexity}
  \end{center}
\end{figure}

%In addition to the correctness analysis, 
Moreover, the asymptotic time 
complexity analysis of \vstwo\ in \cite{Sharifzadeh} is incorrect. 
The authors assumed implicitly that 
\vstwo\ tests only $O(|S|)$ points and claimed that it finds $S$ in time
$O(|S|^2|\CH{Q}|+\sqrt{|P|})$. However, a skyline point $p$ can
have at most $O(|P|)$ Voronoi neighbors that are all spatially
dominated by $p$, as Fig.~\ref{fig:vs2_complexity} illustrates. 
Since it also calls $|P|$ heap operations during the iteration,
each of which takes $\log|P|$, 
the correct worst-case time complexity of \vstwo\ must be
$O(|P|(|S||\CH{Q}|+\log|P|))$.

\section{Computing Spatial Skylines}\label{sec:algo}
We first propose a progressive algorithm for the spatial skyline problem, 
which retrieves all the spatial skyline points of $P$ with respect to $Q$,
then we improve this algorithm by using the Voronoi diagram of the dataset.

We assume the dimensionality $d$ of data and query points as $d=2$ 
for now, which can be extended for arbitrary dimension 
(as we will discuss in Section \ref{sec:con}). 

Before we explain our algorithms, we show some properties of spatial
skyline that will be used later on.
The following lemma is the contraposition of Definition~\ref{def:dominate}.

\begin{lemma}\label{lemma1}
  $p_1$ does not spatially dominate $p_2$ if and only if
  either $\dist{p_1,q}>\dist{p_2,q}$ for some $q\in Q$,
  or $\dist{p_1,q}=\dist{p_2,q}$ for every $q\in Q$.
\end{lemma}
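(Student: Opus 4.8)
The plan is to prove Lemma~\ref{lemma1} as a pure exercise in propositional logic, simply negating the defining condition of spatial dominance from Definition~\ref{def:dominate}. Writing $A$ for the predicate ``$\dist{p_1,q} \leq \dist{p_2,q}$ for every $q\in Q$'' and $B$ for ``$\dist{p_1,q'} < \dist{p_2,q'}$ for some $q'\in Q$,'' the statement ``$p_1$ spatially dominates $p_2$'' is exactly $A \wedge B$, so ``$p_1$ does not spatially dominate $p_2$'' is $\neg(A\wedge B) = \neg A \vee \neg B$. I would first unpack $\neg A$: the negation of ``$\dist{p_1,q}\leq\dist{p_2,q}$ for all $q$'' is ``there exists $q\in Q$ with $\dist{p_1,q} > \dist{p_2,q}$,'' which is the first disjunct in the claimed statement. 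That disjunct alone already handles the case $\neg A$, regardless of $B$.

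The remaining work is to show that the combined condition $\neg A \vee \neg B$ is equivalent to ``$\neg A$, or else $\dist{p_1,q}=\dist{p_2,q}$ for every $q\in Q$.'' The forward direction splits into two cases. If $\neg A$ holds we are immediately in the first disjunct. If instead $A$ holds but $\neg B$ holds, then $\dist{p_1,q}\leq\dist{p_2,q}$ for every $q$ (from $A$) and at the same time there is no $q'$ with $\dist{p_1,q'}<\dist{p_2,q'}$ (from $\neg B$), so for every $q$ we have both $\dist{p_1,q}\leq\dist{p_2,q}$ and $\dist{p_1,q}\not<\dist{p_2,q}$, forcing $\dist{p_1,q}=\dist{p_2,q}$ for every $q$. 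That is the second disjunct. For the converse, note that if $\dist{p_1,q}>\dist{p_2,q}$ for some $q$ then $A$ fails, so $A\wedge B$ fails; and if $\dist{p_1,q}=\dist{p_2,q}$ for every $q$ then $B$ fails (there is no strict inequality anywhere), so again $A\wedge B$ fails. In either case $p_1$ does not spatially dominate $p_2$.

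There is essentially no obstacle here; the only thing to be careful about is the bookkeeping of quantifiers when negating $A$ and $B$, and the observation that ``$\leq$ and not $<$'' collapses to ``$=$'' for real numbers. I would present it in two or three short lines rather than belaboring the case analysis, since it is purely the contrapositive (more precisely, the De~Morgan dual) of Definition~\ref{def:dominate} as the surrounding text already advertises.
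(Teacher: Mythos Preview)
Your proposal is correct and matches the paper's approach exactly: the paper does not even write out a proof for this lemma, stating only that it ``is the contraposition of Definition~\ref{def:dominate},'' and your argument is precisely the careful De~Morgan/quantifier unpacking of that contraposition. If anything, you are more thorough than the paper, which leaves the verification implicit.
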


\begin{lemma}\label{lemma3}
  Let $p_1, p_2$ and $p_3$ be three data points such that $p_2$ spatially
  dominates $p_3$. If $p_1$ does not spatially dominate $p_3$, it does
  not spatially dominate $p_2\in P$.
\end{lemma}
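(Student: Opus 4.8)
The plan is to argue by contraposition on the conclusion, using the transitivity-like structure that Lemma~\ref{lemma1} provides. Assume, for contradiction, that $p_1$ \emph{does} spatially dominate $p_2$. By Definition~\ref{def:dominate} this means $\dist{p_1,q}\le\dist{p_2,q}$ for every $q\in Q$, together with strict inequality at some query point. I would like to combine this with the hypothesis that $p_2$ spatially dominates $p_3$, i.e.\ $\dist{p_2,q}\le\dist{p_3,q}$ for every $q\in Q$ (strict somewhere), to conclude that $p_1$ spatially dominates $p_3$, contradicting the assumption that it does not.

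The key step is a chaining of the two chains of inequalities. For each $q\in Q$ we get $\dist{p_1,q}\le\dist{p_2,q}\le\dist{p_3,q}$, so the ``$\le$ everywhere'' half of spatial dominance of $p_3$ by $p_1$ is immediate. For the strict part, I need a query point $q'$ with $\dist{p_1,q'}<\dist{p_3,q'}$. Since $p_1$ dominates $p_2$, pick $q'$ with $\dist{p_1,q'}<\dist{p_2,q'}$; then $\dist{p_1,q'}<\dist{p_2,q'}\le\dist{p_3,q'}$, which gives exactly what is needed. (Alternatively one could take the strict point coming from $p_2$ dominating $p_3$ and push it through the other inequality; either witness works.) Hence $p_1$ spatially dominates $p_3$, the desired contradiction.

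I expect no real obstacle here — the statement is essentially the transitivity of spatial dominance packaged with one non-dominance hypothesis, and it follows purely from the pointwise inequalities defining dominance together with transitivity of $\le$ and $<$ on the reals. The only point requiring a little care is making sure the strict inequality survives the composition: one must remember that a strict inequality composed with a (possibly non-strict) $\le$ on the other side still yields a strict inequality, so the witness $q'$ can always be produced. It is also worth noting that Lemma~\ref{lemma1} is not strictly required for this argument but gives a convenient restatement of non-dominance; the cleanest write-up is the direct contrapositive sketched above.
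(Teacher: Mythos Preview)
Your argument is correct. The paper takes a slightly different organizational route: rather than arguing by contraposition, it works directly from the hypothesis that $p_1$ does not dominate $p_3$, invoking Lemma~\ref{lemma1} to split into two cases --- either some $q'$ satisfies $\dist{p_3,q'}<\dist{p_1,q'}$, or $\dist{p_3,q}\le\dist{p_1,q}$ for all $q$ --- and in each case produces a query point witnessing $\dist{p_2,q}<\dist{p_1,q}$ by chaining with the dominance of $p_3$ by $p_2$. Your contrapositive version is arguably cleaner, since it avoids the case split entirely and makes the underlying transitivity of spatial dominance explicit; the paper's version has the minor advantage of exercising Lemma~\ref{lemma1} as stated. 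Both rest on the same elementary chaining of $\le$ and $<$, and your remark that the strict witness survives composition with a non-strict inequality is exactly the point that carries the paper's Case~(1) and Case~(2) as well.
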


\begin{proof}
  Since $p_1$ does not spatially dominate $p_3$,
  either (1) $\dist{p_3,q'}<\dist{p_1,q'}$ for some $q'\in Q$,
  or (2) $\dist{p_3,q}\leq \dist{p_1,q}$ for every $q\in Q$
  by Lemma~\ref{lemma1}.
  \newline{Case (1)}. By Def~\ref{def:dominate},
  $\dist{p_2,q}\leq \dist{p_3,q}$ for every $q\in Q$.  This implies
  that $\dist{p_2,q'}\leq \dist{p_3,q'}<\dist{p_1,q'}$. Therefore,
  $p_1$ does not spatially dominate $p_2$ by Lemma~\ref{lemma1}.
  \newline{Case (2)}. Since $p_2$ spatially dominates $p_3$, there
  exists a point $q\in Q$ satisfying $\dist{p_2,q}<\dist{p_3,q}$,
  which implies that $\dist{p_3,q}\leq \dist{p_1,q}$.
  Therefore, $p_1$ does not spatially dominate $p_2$ by Lemma~\ref{lemma1}.
\end{proof}

\begin{lemma}\label{lemma4}
  If some data point $p_1$ is not a spatial skyline point, there
  always exists a spatial skyline point $p_2$ that spatially dominates
  $p_1$.
\end{lemma}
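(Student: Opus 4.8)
The plan is to observe that spatial dominance is a strict partial order on the \emph{finite} set $P$, so that repeatedly passing to a dominating point must terminate, and it can only terminate at a spatial skyline point; that terminal point will be the desired $p_2$.

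First I would verify the two order-theoretic properties. Spatial dominance is \emph{irreflexive}: by Definition~\ref{def:dominate}, for $p$ to spatially dominate $p$ one would need $\dist{p,q'}<\dist{p,q'}$ for some $q'$, which is impossible. It is also \emph{transitive}: if $p_3$ spatially dominates $p_2$ and $p_2$ spatially dominates $p_1$, then applying Lemma~\ref{lemma3} with its $(p_1,p_2,p_3)$ instantiated as $(p_3,p_2,p_1)$ — so that ``$p_2$ spatially dominates $p_1$'' is the hypothesis and ``if $p_3$ does not spatially dominate $p_1$ then $p_3$ does not spatially dominate $p_2$'' is the conclusion — the contrapositive gives that $p_3$ spatially dominates $p_1$. (Alternatively, transitivity can be read off directly from Definition~\ref{def:dominate} together with Lemma~\ref{lemma1}.) Consequently spatial dominance admits no cycles, and therefore every nonempty finite subset of $P$ has an element that is maximal with respect to it.

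Now suppose $p_1$ is not a spatial skyline point. By Definition~\ref{def:skyline} the set $D=\{p\in P : p \text{ spatially dominates } p_1\}$ is nonempty, so by the previous step it contains a point $p_2$ that is maximal in $D$ under spatial dominance. I claim $p_2$ is a spatial skyline point of $P$. If it were not, some $p_3\in P$ would spatially dominate $p_2$; since $p_2$ spatially dominates $p_1$, transitivity would give that $p_3$ spatially dominates $p_1$, i.e. $p_3\in D$, contradicting the maximality of $p_2$ in $D$. Hence $p_2$ is a spatial skyline point, and $p_2$ spatially dominates $p_1$ by the definition of $D$, which is exactly the assertion of the lemma.

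The argument is short, and the only points that need care are bookkeeping ones: getting the index substitution in the invocation of Lemma~\ref{lemma3} right, and making sure the phrase ``maximal element of a finite poset'' is invoked legitimately, i.e. that irreflexivity together with transitivity genuinely rules out the cycles that could otherwise prevent a maximal element from existing. I do not expect any substantive obstacle here — the finiteness of $P$ does all the real work.
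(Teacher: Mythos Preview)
Your proof is correct, but it takes a genuinely different route from the paper's. The paper picks $p_2$ \emph{constructively}: among all points of $P'=\{p: p\text{ dominates }p_1\}$ it selects the one minimizing $\sum_{q\in Q}\dist{p,q}$, then argues in two steps---first that no $p'\in P'$ can dominate $p_2$ (since domination would force a strictly smaller sum), and second, via Lemma~\ref{lemma3}, that no $p'\notin P'$ can dominate $p_2$. Your argument is purely order-theoretic: you verify that spatial dominance is irreflexive and transitive, pick any maximal element of the finite set $P'$, and let transitivity absorb both cases at once (any dominator of $p_2$ would automatically land in $P'$ and violate maximality). What your approach buys is generality and cleanliness---it uses nothing about Euclidean distance and would work verbatim for any dominance relation satisfying Definition~\ref{def:dominate}; it also sidesteps a small wrinkle in the paper's argument about whether a strict inequality $\dist{p_2,q}<\dist{p',q}$ must actually occur when sums are tied. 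What the paper's approach buys is an explicit, computable witness for $p_2$, which is closer in spirit to the algorithmic use the lemma is put to later.
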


\begin{proof}  Since $p_1$ is not a spatial
  skyline point, there exists some data point that spatially dominates
  $p_1$.  Let $P'$ be the set of the data points that spatially
  dominate $p_1$, and let $p_2$ be the point which has the minimum sum of
  distances to all $q\in Q$ among points in $P'$. Then it is not
  difficult to see that for every point $p'\in P'$, there always
  exists some query point $q$ such that
  $\dist{p_2,q}<\dist{p',q}$. Therefore, $p_2$ is not spatially
  dominated by any point in $P'$. By Lemma~\ref{lemma3}, $p_2$ is not
  spatially dominated by any data point which does not spatially
  dominate $p_1$. This means that $p_2$ is not spatially dominated by any
  other data points, so $p_2$ is a spatial skyline point.
\end{proof}

We now move on to discuss how to use these properties to
reduce (1) the time required for each dominance test and (2) the
number of dominance tests.

\subsection{Efficient Spatial Dominance Test}\label{subsubsec:efficient_check}
Sharifzadeh and Shahabi~\cite{Sharifzadeh} showed that we can
determine spatial dominance by using just the convex hull of $Q$
instead of all query points in $Q$: If $p\in P$ is not dominated
by any other point in $P$ with respect to the vertices of \CH{Q},
then $p$ is a spatial skyline point.
In fact, we can interpret this property in a geometric setting as follows.

\begin{figure}[ht]
  \begin{center}
    \epsfig{file=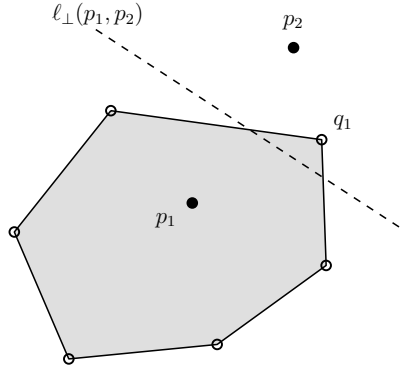, width=.43\linewidth}
    \caption{\CH{Q} intersect the bisector of two data points}
    \label{fig:convex_bisect}
  \end{center}
\end{figure}

\begin{lemma}\label{lemma:geometric-property}
The bisector of two data points intersects the interior of \CH{Q}
if and only if they do not spatially dominate each other.
\end{lemma}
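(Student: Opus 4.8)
The plan is to prove both directions via the contrapositive, reducing everything to the characterization of spatial dominance by the vertices of \CH{Q} quoted just above from~\cite{Sharifzadeh}, together with elementary facts about perpendicular bisectors. Fix two data points $p_1, p_2$, and let $\ell_\perp$ denote their bisector, which partitions the plane into two open half-planes $H_1$ (points strictly closer to $p_1$) and $H_2$ (points strictly closer to $p_2$), with $\ell_\perp$ itself the locus of points equidistant from $p_1$ and $p_2$.

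First I would handle the easy direction: suppose $\ell_\perp$ does \emph{not} meet the interior of \CH{Q}. Then \CH{Q}, being convex, lies entirely in the closure of one of the two half-planes, say $\overline{H_1}$; in particular every query point $q \in Q$ satisfies $\dist{p_1,q} \le \dist{p_2,q}$. If in addition some $q'$ lies in the open half-plane $H_1$ we get strict inequality there, so $p_1$ spatially dominates $p_2$; if no query point lies off $\ell_\perp$ then all of $Q$ lies on $\ell_\perp$ and $\dist{p_1,q}=\dist{p_2,q}$ for every $q$, in which case by Lemma~\ref{lemma1} neither dominates the other — but then the contrapositive target ``they do not spatially dominate each other'' is literally satisfied, so this subcase is harmless. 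Thus whenever the bisector misses the interior, \emph{either} one dominates the other \emph{or} $Q \subseteq \ell_\perp$; in the first case the ``only if'' direction's contrapositive holds, and the degenerate second case I will note explicitly so the iff still reads correctly.

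For the converse I would argue: suppose $\ell_\perp$ \emph{does} meet the interior of \CH{Q}. Since the interior is open and nonempty on the bisector, I can find a point $x$ in $\mathrm{int}\,\CH{Q} \cap \ell_\perp$ and then perturb slightly to obtain interior points $x_1 \in H_1$ and $x_2 \in H_2$; each $x_i$ is a convex combination of the vertices of \CH{Q}. Because $\dist{p_1,\cdot}^2 - \dist{p_2,\cdot}^2$ is an \emph{affine} function of the point (the quadratic terms cancel), and it is negative at $x_1$ and positive at $x_2$, it cannot have a single sign on all vertices of \CH{Q}: there is a vertex $q_a$ with $\dist{p_1,q_a} < \dist{p_2,q_a}$ and a vertex $q_b$ with $\dist{p_2,q_b} < \dist{p_1,q_b}$. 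By Lemma~\ref{lemma1}, the existence of $q_b$ shows $p_1$ does not spatially dominate $p_2$, and the existence of $q_a$ shows $p_2$ does not spatially dominate $p_1$; hence they do not dominate each other. (Invoking the vertex characterization of~\cite{Sharifzadeh} lets me phrase this purely in terms of \CH{Q}'s vertices, but the affine-function observation already gives it directly from the $x_i$.)

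The main obstacle — really the only delicate point — is the treatment of boundary and degenerate configurations: the bisector touching \CH{Q} only along an edge or at a vertex (so it meets the boundary but not the interior), and the fully degenerate case where all query points are equidistant from $p_1$ and $p_2$. The statement is phrased with ``interior'' precisely to make the first situation fall on the ``miss'' side, and I would make sure the affine/convexity argument in the converse genuinely uses \emph{interior} points $x_1, x_2$ so that the strict inequalities at vertices are guaranteed; the degenerate equidistant case I would dispose of by appealing to Lemma~\ref{lemma1} as above, observing that it is consistent with ``do not spatially dominate each other'' and hence does not break the equivalence. Everything else is the one-line convex-separation fact plus the cancellation of quadratic terms in $\dist{p_1,x}^2 - \dist{p_2,x}^2$.
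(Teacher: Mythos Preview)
Your overall approach is the same as the paper's---the bisector splits the plane into two half-planes, and one checks which side the vertices of \CH{Q} fall on---and your treatment of the direction ``bisector meets the interior $\Rightarrow$ neither dominates'' is in fact more careful than the paper's, since you justify via the affine function $\dist{p_1,\cdot}^2-\dist{p_2,\cdot}^2$ why interior points on both sides force \emph{vertices} on both sides, whereas the paper simply asserts this.

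There is, however, a genuine logical slip in your handling of the degenerate case. You are proving the contrapositive of the $(\Leftarrow)$ direction: assuming the bisector misses the interior of \CH{Q}, you want to conclude that \emph{one of the two points dominates the other}. When you reach the subcase $Q\subseteq\ell_\perp$, you correctly deduce that neither dominates the other---but then you write that ``the contrapositive target `they do not spatially dominate each other' is literally satisfied, so this subcase is harmless.'' That is the target of the \emph{wrong} direction. In this subcase the bisector does \emph{not} meet the interior of \CH{Q} (the interior is empty), yet neither point dominates; this is a counterexample to the $(\Leftarrow)$ direction, not a confirmation of it. The same issue arises more generally whenever $Q$ is collinear: the interior of \CH{Q} is then empty, so the left-hand side of the iff is vacuously false regardless of dominance.

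The paper's own proof simply ignores this degeneracy (it tacitly assumes \CH{Q} has nonempty interior). The right fix is to state that hypothesis explicitly, or to note that the lemma as written fails when $Q$ is collinear---not to claim the equivalence survives.
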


\begin{proof}
 If the bisector of two data points
 intersects the interior of \CH{Q}, then for each of the data points,
 there exists a vertex of \CH{Q} closer to it than the other.
 For example, in Fig.~\ref{fig:convex_bisect}, the bisector of $p_1$ 
 and $p_2$ intersects \CH{Q}, so at least one query point is closer to one
 of each data point than the other.
 Therefore they do not dominate each other.
 If the bisector does not intersect the interior of \CH{Q},
 all the vertices of \CH{Q} (therefore all the query points) are closer
 to one data point than the other. It means one data point spatially 
dominates the other point.
\end{proof}

\noindent As we can determine whether a line intersects the convex hull
or not in $O(\log|\CH{Q}|)$ time by using a binary search technique, 
the dominance test can be done in the same time.
% $O(\log|\CH{Q}|)$ time as well.

\comment{include algorithm detail. shorten the description in Implementation.}

% For tw points $p_1$ and $p_2$, if the bisector intersects the convex hull, 
% at least one point of the upper chain lies above
% the line, and one point of the lower chain lies below the line. 
% If the slope of the line is in between the slopes of two edges in
% upper chain and these two edges share a point $q_1$, then $q_1$ lies above the line when the line 
% intersects the convex hull like Fig. \ref{fig:convex_bisect}.
% We can also find the point that should lie below the line when the line intersects the convex hull
% by similar method. 
% We can find these two points $O(\log|\CH{Q}|)$ time by binary search, 
% because edges of the upper chain and lower chain 
% are sorted in order of slopes respectively. 

\begin{lemma}\label{lemma5}
  When \CH{Q} is given, the dominance test for a pair of data points can
  be done in $O(\log|\CH{Q}|)$ time.
\end{lemma}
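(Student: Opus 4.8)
The plan is to reduce the dominance test to a line/convex-polygon intersection test and then carry out that test by binary search. By Lemma~\ref{lemma:geometric-property}, $p_1$ and $p_2$ fail to spatially dominate each other precisely when the perpendicular bisector $\ell_\perp(p_1,p_2)$ of $\overline{p_1p_2}$ meets the interior of \CH{Q}; and when they do not, the data point lying in the closed half-plane bounded by $\ell_\perp(p_1,p_2)$ that contains \CH{Q} is the one that dominates. Since $\ell_\perp(p_1,p_2)$ is computed from $p_1$ and $p_2$ in $O(1)$ time, it suffices to show that, given \CH{Q}, one can decide in $O(\log|\CH{Q}|)$ time whether a query line $\ell$ crosses the interior of \CH{Q}, and on which side \CH{Q} lies otherwise.

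To do this I would introduce a nonzero affine function $f$ on the plane that vanishes on $\ell$, so that the sign of $f(x)$ records which open half-plane of $\ell$ contains $x$. Then $\ell$ crosses the interior of \CH{Q} if and only if $\max_v f(v) > 0$ and $\min_v f(v) < 0$, the extrema taken over the vertices $v$ of \CH{Q}; and if, say, $\max_v f(v)\le 0$, then \CH{Q} lies in the closed half-plane $\{f\le 0\}$, whose side relative to $p_1$ is read off from the sign of $f(p_1)$. So the task becomes: maximize and minimize a fixed linear functional over the vertices of a convex polygon.

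Here I would invoke the standard fact that along the boundary of a convex polygon, traversed in cyclic (say counterclockwise) order, the value of a fixed linear functional is unimodal — it rises to the extreme vertex in the gradient direction of the functional and then falls to the extreme vertex in the opposite direction. Assuming \CH{Q} is stored as its cyclic list of vertices in convex position (the usual convex-hull output format, and the representation our algorithm maintains), this unimodality supports a binary search that locates the maximizing vertex, and separately the minimizing vertex, in $O(\log|\CH{Q}|)$ time each. Combined with the $O(1)$ work of constructing $\ell_\perp(p_1,p_2)$ and performing the final sign tests, this yields the claimed $O(\log|\CH{Q}|)$ bound.

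The part I expect to need the most care is not the binary search but the degenerate configurations: a vertex of \CH{Q} lying on $\ell$, an edge of \CH{Q} contained in $\ell$, or a lower-dimensional \CH{Q} (all query points collinear). The strict inequalities ``$\max_v f(v)>0$ and $\min_v f(v)<0$'' are designed to absorb these — a polygon that merely touches $\ell$ along a vertex or an edge yields $\max_v f(v)=0$ or $\min_v f(v)=0$ and is correctly classified as not meeting the interior — but spelling out the correspondence between these boundary cases and the direction of spatial dominance, checking each subcase against Definition~\ref{def:dominate}, is the one spot where I would slow down.
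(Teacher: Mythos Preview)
Your proposal is correct and follows essentially the same route as the paper: reduce the dominance test via Lemma~\ref{lemma:geometric-property} to testing whether the bisector $\ell_\perp(p_1,p_2)$ crosses \CH{Q}, and then locate the extreme vertices of \CH{Q} in the direction normal to $\ell_\perp(p_1,p_2)$ by binary search. The only cosmetic difference is that the paper organizes the binary search by splitting \CH{Q} into its upper and lower chains and searching on edge slopes, whereas you phrase it as maximizing/minimizing a linear functional using unimodality along the cyclic vertex list; these are two standard descriptions of the same $O(\log|\CH{Q}|)$ extreme-vertex query.
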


\subsection{Bounding the Number of Dominance Test}
\label{subsubsec:bound_check} To make the algorithm faster, we
%must 
reduce the number of dominance tests. 
Toward the goal, for some vertex $q$ of \CH{Q}, we keep 
the sorted list $\A$ of all the data points in
the ascending order of distance from $q$. 
With this list, we can determine that, if a data point $p_1$ is
located before $p_2$ in \A, then $p_2$ does not spatially dominate
$p_1$ using Lemma~\ref{lemma1}.
Therefore, together with Lemma~\ref{lemma4}, it is sufficient to
perform the dominance test on $p$ only with the spatial skyline
points that are located before $p$ in \A, as we formally state below.

\begin{lemma}\label{lemma6}
  For a data point $p$, if we have the set of all the spatial skyline points
  located before $p$ in \A, we can determine whether $p$ is a spatial skyline
  or not by $O(|S|)$ dominance tests.
\end{lemma}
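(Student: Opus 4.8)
The plan is to treat Lemma~\ref{lemma6} as a ``filter against skyline points only'' statement, combining the ordering built into $\A$ with Lemma~\ref{lemma4}. First I would isolate the key consequence of how $\A$ is constructed: since $\A$ lists the data points in ascending order of distance from the fixed vertex $q$ of \CH{Q}, a point $p'$ that appears \emph{after} $p$ in $\A$ satisfies $\dist{p,q}\le\dist{p',q}$, and hence by Lemma~\ref{lemma1} $p'$ does not spatially dominate $p$. Taking the contrapositive, \emph{every} data point that spatially dominates $p$ must occur before $p$ in $\A$.

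Next I would invoke Lemma~\ref{lemma4}. If $p$ is not a spatial skyline point, then some spatial skyline point $p_2$ spatially dominates $p$, and by the observation above $p_2$ occurs before $p$ in $\A$. Conversely, if no spatial skyline point located before $p$ in $\A$ dominates $p$, then (by the contrapositive of Lemma~\ref{lemma4}, applied together with the same observation) no data point whatsoever dominates $p$, so $p$ is a spatial skyline point. Hence the decision procedure is: run one dominance test of $p$ against each spatial skyline point located before $p$ in $\A$, and declare $p$ a spatial skyline point iff all of these tests report ``not dominated''. Since the full set of spatial skyline points has size $|S|$, at most $|S|$ of them precede $p$, so this uses $O(|S|)$ dominance tests (each costing $O(\log|\CH{Q}|)$ time by Lemma~\ref{lemma5}, though the statement only counts tests).

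The one place that needs care — and the main obstacle — is ties in $\A$, i.e.\ data points equidistant from $q$, since for such a pair the order in $\A$ is not forced and a dominator of $p$ could in principle appear just after $p$. I would resolve this by refining $\A$ with a secondary sort key, for instance the sum $\sum_{q'\in Q}\dist{p,q'}$ of distances to all query points: if $p'$ is after $p$ then either $\dist{p,q}<\dist{p',q}$ (so $p'$ cannot dominate $p$ by Lemma~\ref{lemma1}), or $\dist{p,q}=\dist{p',q}$ and $\sum_{q'\in Q}\dist{p,q'}\le\sum_{q'\in Q}\dist{p',q'}$, which by Definition~\ref{def:dominate} again rules out $p'$ dominating $p$ (domination forces a strict inequality at some query point, hence a strictly smaller sum). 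With this refinement the two paragraphs above go through verbatim; alternatively, under a general-position assumption that the data points have pairwise distinct distances from $q$, no tie-breaking is needed.
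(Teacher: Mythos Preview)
Your argument is correct and matches the paper's: points after $p$ in $\A$ cannot dominate $p$ by Lemma~\ref{lemma1}, and by Lemma~\ref{lemma4} any dominated point is dominated by some skyline point (which must then precede it in $\A$), so testing $p$ against the at most $|S|$ preceding skyline points suffices. The only difference is in tie-breaking: the paper breaks ties using the distance to a second (and, if needed, third) vertex of $\CH{Q}$, relying on the fact that two distinct data points cannot be equidistant from three vertices of $\CH{Q}$, whereas you use the total distance $\sum_{q'\in Q}\dist{\cdot,q'}$; both choices are valid, and yours has the minor advantage of not requiring $|\CH{Q}|\ge 3$.
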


If there are two data points with the same distance from $q$,
we can break the tie by computing the distances from another vertex
of \CH{Q}. Since no two points have the same distance from three vertices of
\CH{Q}, we only need to do this at most three times.

We now present our algorithm for retrieving all the
spatial skylines. As we can see, the algorithm is surprisingly simple 
and easy to follow.

\begin{algorithm}{SpatialSkyline}{
    \label{algo:exact}
    \qinput{$P, Q$} \qoutput{$S$} }
    initialize the array \A\ and the list $S$\\
    compute the \CH{Q}\label{exact:CH}\\
    \A\qlet the distances from $q_1\in Q$ to every data point\label{exact:dist}\\
    sort \A~in ascending order\label{exact:sort}\\
    \qfor $i \qlet 0$ \qto $|P|-1$ \label{exact:for}\\
    \qdo
    \qif \A[i] is not spatially dominated by $S$ \label{exact:dominance}\\
    \qthen insert \A[i] to $S$ \qfi\qrof\label{exact:rof}\\
    \qreturn $S$
\end{algorithm}

We now analyze the time complexity of~\ref{algo:exact}. 
In line \ref{exact:CH}, the convex hull can be constructed in $O(|Q|\log|Q|)$ 
time~\cite{CG}. Line~\ref{exact:dist} takes $O(|P|)$ time and sorting in 
line~\ref{exact:sort} can be done in $O(|P|\log|P|)$ time.  
In line~\ref{exact:dominance}, we perform the dominance test $O(|S|)$ times, 
each of which takes $O(\log|\CH{Q}|)$ time.
As the \textbf{for} loop in lines from~\ref{exact:for} to~\ref{exact:rof} repeats
$|P|$ times, the entire loop takes $O(|P||S||\log|\CH{Q}|)$ time.
Since $|Q|<|P|$ in most realistic skyline models, the total
time complexity is $O(|P|(|S|\log|\CH{Q}|+\log|P|))$.

\subsection{Bypassing Dominance Tests using the Voronoi Diagram}
In this section, we discuss how we can further reduce dominance tests
by identifying a subset of skyline results,
which we call \emph{seed skylines}, that can be identified
as skyline points 
with no dominant test. That is, before we perform the
algorithm~\ref{algo:exact}, we can quickly retrieve this \emph{seed skylines} to
improve
the performance of the algorithm
dramatically, by bypassing dominance tests on these skylines.

To achieve this goal, we first discuss a relationship of the Voronoi
diagram $\VD{P}$ of a dataset $P$ and $\CH{Q}$.
Theorem~\ref{thm:intersect} describes this relationship between $\VD{P}$
and $\CH{Q}$.

\begin{theorem}[Seed Skyline]\label{thm:intersect}
For given a set $P$ of data points and a set $Q$ of query points,
if the Voronoi cell $\VC{p}$ of $p\in P$ intersects with the boundary
of $\CH{Q}$ or $\CH{Q}$ contains $\VC{p}$,
then $p$ is a skyline point~\cite{Sharifzadeh}.
\end{theorem}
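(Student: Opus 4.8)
The plan is to prove the contrapositive in the natural geometric language established by Lemma~\ref{lemma:geometric-property}: I want to show that if $p$ is \emph{not} a skyline point, then its Voronoi cell $\VC{p}$ lies entirely in the exterior of $\CH{Q}$ and does not touch the boundary. So suppose $p$ is dominated; by Lemma~\ref{lemma4} there is a skyline point $p'$ (in particular another data point) that spatially dominates $p$. By Lemma~\ref{lemma:geometric-property}, since $p'$ dominates $p$, the bisector $\ell_\perp(p,p')$ does \emph{not} intersect the interior of $\CH{Q}$; moreover, because the domination is strict at some query vertex, all vertices of $\CH{Q}$ — hence all of $\CH{Q}$, by convexity — lie strictly on the $p'$ side of that bisector. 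Call that open halfplane $H$; we have $\CH{Q}\subseteq H$ and $p\notin H$ (indeed $p$ is on the far side).

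The key step is then to argue that $\VC{p}$ is disjoint from $H$, which immediately gives $\VC{p}\cap\CH{Q}=\emptyset$ and $\VC{p}$ does not meet the boundary of $\CH{Q}$ either. This follows from the defining property of Voronoi cells: any point $x\in\VC{p}$ satisfies $\dist{x,p}\le\dist{x,p'}$, since $p$ is the closest site to $x$ and $p'$ is also a site. But $\dist{x,p}\le\dist{x,p'}$ is exactly the condition that $x$ lies on the $p$ side (closed halfplane) of the bisector $\ell_\perp(p,p')$, which is the complement of the open halfplane $H$. Hence $\VC{p}\cap H=\emptyset$. Since $\CH{Q}\subseteq H$, we conclude $\VC{p}\cap\CH{Q}=\emptyset$, and in particular $\VC{p}$ neither intersects the boundary of $\CH{Q}$ nor is contained in $\CH{Q}$. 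This is the contrapositive of the theorem.

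The one place to be slightly careful — and what I expect to be the main obstacle, though a minor one — is the handling of the boundary/degenerate cases: whether the bisector is allowed to be \emph{tangent} to $\CH{Q}$, and whether "$\CH{Q}$ contains $\VC{p}$" can happen when $\VC{p}$ is unbounded. For the first, the strict inequality $\dist{p',q'}<\dist{p,q'}$ at the dominating query vertex $q'$ guarantees $\CH{Q}$ is not merely on the closed $p'$-side but that at least one vertex is strictly inside $H$; combined with $\VC{p}$ lying in the closed complementary halfplane, the only possible contact is along the bisector line itself, which contains no vertex of $\CH{Q}$ strictly — one should check that a shared boundary segment on the bisector still does not put any point of $\CH{Q}$ into $\VC{p}$'s closure in a way that contradicts disjointness of the relevant \emph{open} region; using closed/open halfplanes consistently resolves this. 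For the second, if $\VC{p}$ is unbounded it certainly cannot be contained in the bounded set $\CH{Q}$, and if it is bounded the halfplane argument already excludes it; so no separate treatment is needed. I would also remark that this recovers and slightly sharpens the statement cited from~\cite{Sharifzadeh}, now as a clean corollary of Lemma~\ref{lemma:geometric-property} and Lemma~\ref{lemma4}.
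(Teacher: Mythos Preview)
Your argument is correct and, in fact, more self-contained than what the paper does: the paper does not prove this theorem at all but simply defers to Theorems~1 and~3 of Sharifzadeh and Shahabi~\cite{Sharifzadeh}. Your contrapositive argument instead derives the result directly from the paper's own Lemma~\ref{lemma:geometric-property}, which is a cleaner internal route. One small simplification: you do not need Lemma~\ref{lemma4} here. Any dominating data point $p'$ (not necessarily a skyline) suffices, since all you use about $p'$ is that it is a site of the Voronoi diagram and that $\dist{p',q}\le\dist{p,q}$ for every $q\in Q$; the existence of such a $p'$ is immediate from the assumption that $p$ is not a skyline. The halfplane containment you describe---$\CH{Q}$ in the closed $p'$-side of the bisector, $\VC{p}$ in the closed $p$-side, hence their intersection is at most a segment of the bisector line---is exactly right. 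The degenerate case you flag (a query edge lying on the bisector) is a genuine boundary issue with the theorem as stated, not a gap in your reasoning; it disappears under a general-position assumption on $P\cup Q$, which is the implicit setting of both this paper and~\cite{Sharifzadeh}.
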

\begin{proof}
See the proofs of Theorem 1 and 3 in \cite{Sharifzadeh}.
\end{proof}

We now present an efficient algorithm to identify the seed skylines, as
the starting point to perform the algorithm~\ref{algo:exact} to identify
the rest of the skyline points.

To retrieve seed skylines efficiently, we first find a
Voronoi cell that contains a vertex of \CH{Q} by using typical point
location query~\cite{CG} on \VD{P}.  From this Voronoi cell, we follow
the edges of \CH{Q} and find the Voronoi cells that intersect the edges.
Then we find Voronoi cells that lie inside \CH{Q} by traversing the
Delaunay graph~\cite{CG}.  Our enhanced algorithm works as follow. Let
$e_i=(q_i,q_{i+1})$ denote the $i$-th edge along the boundary of \CH{Q}.

\begin{algorithm}{SeedSkyline}{
    \label{algo:approx}
    \qinput{$P$, $Q$} \qoutput{$S_{seed}$} }

    initialize $S_{seed}$\\
    compute $\CH{Q}$ and $\VD{P}$\label{approx:CHVD}\\
    find a Voronoi cell \VC{p} containing $q_0$\label{approx:firtcell}\\
    \qfor $i\qlet 0$ \qto $|\CH{Q}|-1$\\
        find all the Voronoi cells $\VC{p}$
        intersecting $e_i$ and insert $p$ to $S_{seed}$\label{approx:intersect}
    \qrof\\
    find all the Voronoi cells $\VC{p}$ lying in $\CH{Q}$
    by traversing Delaunay graph and insert $p$ to $S_{seed}$ \label{approx:delaunay}\\
    \qreturn $S_{seed}$
\end{algorithm}

Note that, we can compute \CH{Q} and \VD{P} in $O(|Q|\log|Q|)$ time 
and in $O(|P|\log|P|)$ time (line~\ref{approx:CHVD}),
respectively, and locate the Voronoi
cell $\VC{p}$ containing the query point $q_0$ in $O(\log|P|)$ time by
point location query on $\VD{P}$ (line~\ref{approx:firtcell}).

\begin{figure}[ht]
  \begin{center}
    \epsfig{file=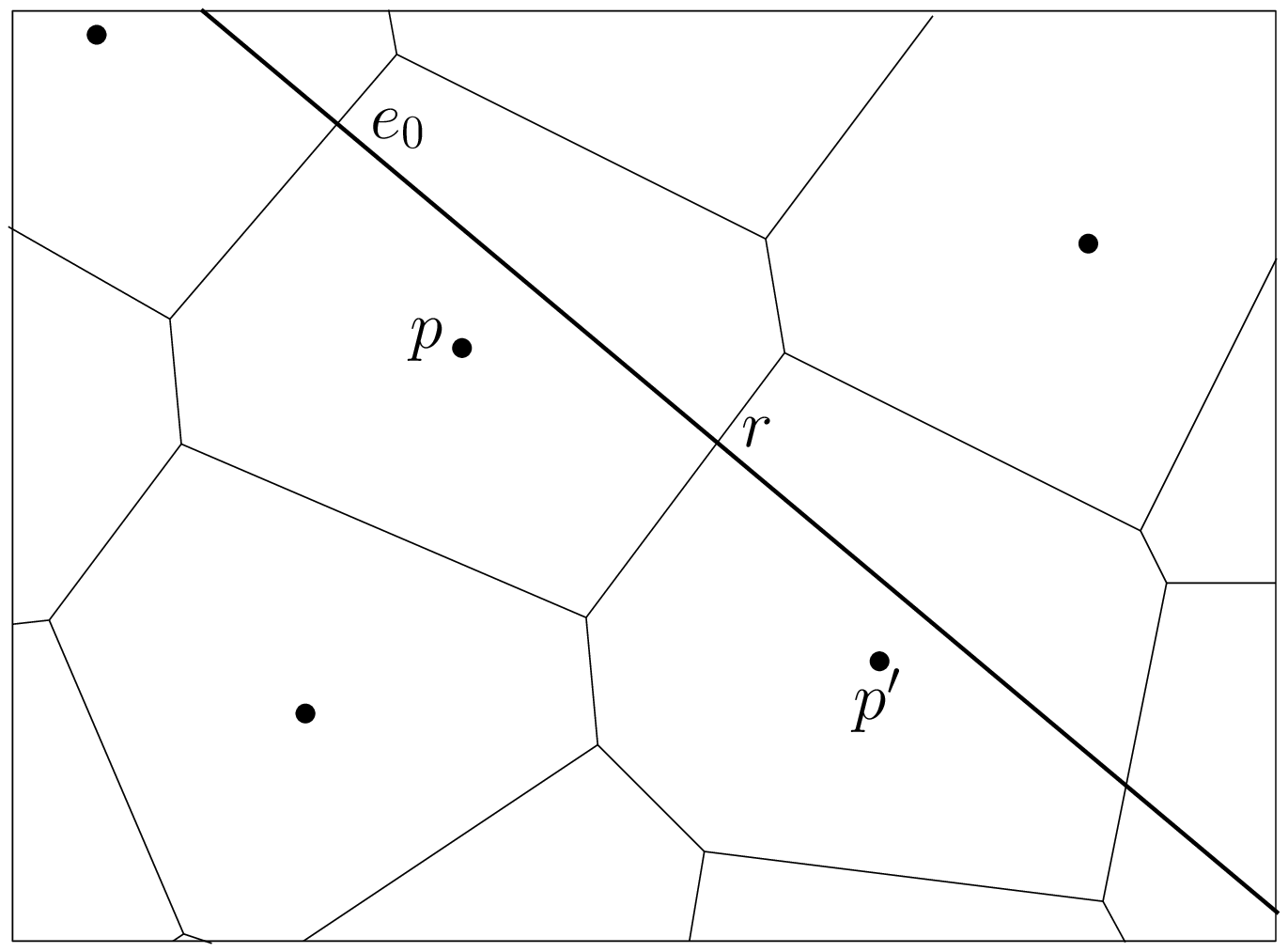, width=.45\linewidth}
    \caption{Two Voronoi cells share the intersection }\label{fig:approx}
  \end{center}
\end{figure}

To find all the Voronoi cells intersecting an edge $e_0=(q_0,q_1)$ in
(line~\ref{approx:intersect}), we follow the procedure below (also
illustrated in Fig \ref{fig:approx}).  We first compute the intersection
$r$ of $e_0$ with the boundary of $\VC{p}$, which can be done in time
$O(\log |P|)$ using binary search because $\VC{p}$ is a convex polygon
and since we store its edges sorted along the boundary, as we will
discuss more later in Section \ref{subsec:vd}. Because $r$ lies on a
boundary edge shared by two neighboring Voronoi cells, we can get the
pointer to the neighboring Voronoi cell $\VC{p'}$ in constant time from
the Delaunay graph. We repeat this until we reach the other endpoint
$q_1$. Then we proceed to the next convex hull edge $e_1=(q_1,q_2)$ and
repeat the above process until we find all the Voronoi cells
intersecting the boundary of \CH{Q}.

Note that a Voronoi cell may contain an edge of $\CH{Q}$ in its interior
or intersect several edges of \CH{Q} -- the number of the intersection
tests is thus bounded by the larger of $O(|S|)$ and $O(|\CH{Q}|)$, \ie,
at most $O(|S|+|\CH{Q}|)$.  Combining the number and cost of
intersection tests, the overall worst-case time complexity becomes
$O((|S|+|\CH{Q}|)\log|P|)$.  Traversing Delaunay graph can be done in
$O(|S|)$ time (line~\ref{approx:delaunay}).  Therefore the total time
complexity of \ref{algo:approx} is $O((|S|+|\CH{Q}|)\log|P|)$ if
$\CH{Q}$ and $\VD{P}$ are given.

\comment{add an algorithm: combination of SpatialSkyline and SeedSkyline, and describe it}

By combining the algorithms~\ref{algo:exact} and~\ref{algo:approx}, 
we can retrieve all spatial skyline points more efficiently than by 
\ref{algo:exact} alone.
Instead of testing dominance for all data points we can find seed 
skylines using~\ref{algo:approx}, and then find the other skylines  
using~\ref{algo:exact}. 
We present the combined algorithm \ref{algo:enhanced} from this idea 
as follows.

\begin{algorithm}{EnhancedSpatialSkyline}{
    \label{algo:enhanced}
    \qinput{$P, Q$} \qoutput{$S$} }

    initialize the array \A\ and the list $S$\\
    compute the \CH{Q}\label{exact:CH}\\
	  $S$ \qlet \ref{algo:approx}$(P,Q)$	\\

    \A\qlet the distances from $q_1\in Q$ to every data point\label{exact:dist}\\
    sort \A~in ascending order\label{exact:sort}\\
    \qfor $i \qlet 0$ \qto $|P|-1$ \label{exact:for}\\
    \qdo
    \qif \A[i] is not in $S$\\
    \qthen\qif \A[i] is not spatially dominated by $S$ \label{exact:dominance}\\
    \qthen insert \A[i] to $S$ \qfi\qfi\qrof\label{exact:rof}\\
    \qreturn $S$
  \end{algorithm}  The asymptotic time complexity of
  \ref{algo:enhanced} is the same as that of
  \ref{algo:exact}. In practice, however, by bypassing the dominance test 
  for seed skylines, it shows better performance than 
  \ref{algo:exact}.

\section{Implementation}\label{sec:imp}

In this section, we discuss the details of our implementation of the
proposed algorithms, including how to compute and store the Voronoi
diagram (Section~\ref{subsec:vd}) and the query convex hull 
(Section~\ref{subsec:ch}) to optimize the implementation of our 
proposed algorithms.

\subsection{Voronoi Diagrams}\label{subsec:vd}

First, we discuss how we construct the Voronoi diagram and the Delaunay
graph of the data points.  As both are extensively studied structures,
many algorithms and codes are available, including `Qhull'
\cite{qhull} which we adopt for our implementation.

However, it is challenging to store the resulting diagram and graph in
such a way that the spatial skyline query computation can be optimized.
Toward the goal, we store the Voronoi cells and Delaunay graph edges as
follows.

\begin{itemize}
\item {\bf cells:} As each Voronoi cell is a convex region, we take
advantage of this convexity and store the vertices of each cell in
increasing angular order from one point, which preserves the
adjacency of vertex pairs in the cell.

\item {\bf edges:}
Every edge of Voronoi cell is shared by a neighboring Voronoi cell. To
represent the Delaunay graph, for each edge $\overline{v_i v_{i+1}}$, from a
vertex $v_i$ of a Voronoi cell, we need to store the pointer to the
neighboring cell sharing the edge.
\end{itemize}

Using this structure, we can exploit the convexity of a Voronoi region and
the Delaunay graph discussed above, by reading only one Voronoi cell block
from file.  To find a specific Voronoi cell block, we maintain a file
pointer for each Voronoi cell block.

\subsection{Convex Hull}\label{subsec:ch}

To compute the convex hull \CH{Q},
we use the \emph{Graham's scan algorithm} \cite{CG}.  By using binary
search technique, the dominance test can be done in $O(\log|\CH{Q}|)$
time, as discussed in Lemma \ref{lemma5}.  We implement the test as
follows.

Remind that we denote the bisector of two data points, $p_1$ and $p_2$,
by $\ell_\perp(p_1, p_2)$.  As discussed in Section
\ref{subsubsec:efficient_check}, we can determine the dominance of two
data points by testing whether $\ell_\perp(p_1,p_2)$ intersects \CH{Q}
or not.  If $\ell_\perp(p_1,p_2)$ intersects \CH{Q}, at least one vertex
of the upper chain of \CH{Q} lies above $\ell_\perp(p_1,p_2)$, and one
vertex of the lower chain of \CH{Q} lies below $\ell_\perp(p_1,p_2)$
(See Fig.~\ref{fig:convex_bisect}). Let $e_i$ and $e_{i+1}$ be two edges
of the upper chain sharing a vertex $q_i$ such that
$\ell_\perp(p_1,p_2)$ has a slope in between the maximum and the minimum
of the slopes of $e_i$ and $e_{i+1}$.  If $\ell_\perp(p_1,p_2)$
intersects \CH{Q}, then $q_i$ lies strictly above $\ell_\perp(p_1,p_2)$
by convexity of \CH{Q}.  We can use a similar argument for the lower
chain of \CH{Q}.  Because the upper and the lower chain of \CH{Q} is
sorted in the increasing order of the slopes of edges, we can find these
two vertices by binary search on the slopes of edges.  After finding
these two vertices in $O(\log|\CH{Q}|)$, we can determine the dominance
in constant time.  When $\CH{Q}$ is small, a linear search may
outperform binary search, and we use linear search in this case.

\subsection{\vstwo}\label{subsec:vs2}

As a baseline to compare with our proposed algorithm, we use \vstwo\
proposed in \cite{Sharifzadeh}.  As the authors could not provide the
code, we implement the algorithm using the same implementation of
\rtree~\cite{rstartree} and the Voronoi diagram we used to implement our
proposed algorithm, to ensure the fairness in empirical comparison.

For constructing the convex hull, we share the same implementation used
for our proposed algorithms, except that, to accommodate the dominance
test of complexity $O(|\CH{Q}|)$ discussed in \cite{Sharifzadeh}, we
use linear scan.

In our implementation, \rtree\ is used to find the closest point to one
query point. The leaves of a \rtree\ index contain Voronoi cells which
are packed by MBRs for each, such that we can easily obtain candidate
Voronoi cells containing a query point. 

However, as shown in Section~\ref{subsec:naive}, \vstwo\ may fail to
find all the spatial skyline points in some cases. Our implementation of
\vstwo\ is revamped to eliminate these cases. Specifically, we remove
one condition. For some $p\in P$, if all its Voronoi neighbors and all
their Voronoi neighbors are spatially dominated by other points, then
original \vstwo\ does not test $p\in P$, but we implement \vstwo\ to
test this point for finding all skyline points.

\begin{figure}[ht]
    \centering
    \epsfig{file=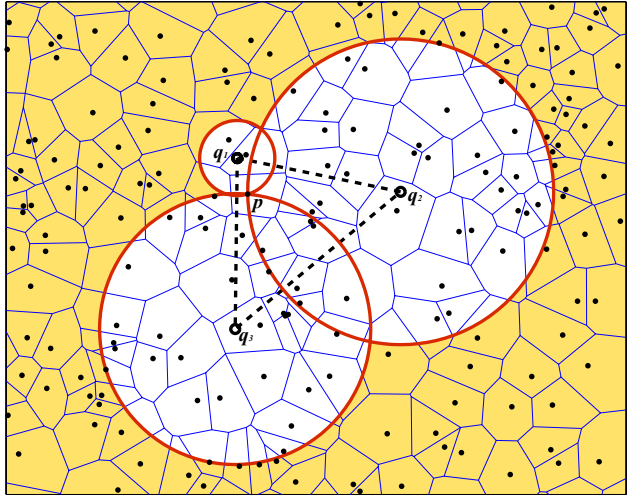, width=.68\linewidth}
    \caption{Dominating region of a skyline object}
    \label{fig:domregion}
\end{figure}

\subsection{Enhanced Spatial Skyline (\esky)}\label{subsec:es}
Our enhanced algorithm works as follows.
We compute the Voronoi diagram and the Delaunay graph of the data points, 
and store them in the form of the file mentioned in
Section~\ref{subsec:vd}.  To find the point closest to one query point,
\rtree\ is used.  Then \esky\ computes the Voronoi cells intersecting
the boundary of the query convex hull and find all the Voronoi cells
lying in the convex hull by traversing the Delaunay graph.  As we only
need to see each Voronoi cell at most once during traversing the
Delaunay graph of the data points, we read it from the file when it is
required and deallocate it from memory after passing it by.

In this process, we restrict the region to search for the rest of the
skylines to the bounding box containing $|Q|$ circles for $|Q|$ query
points (Fig.~\ref{fig:domregion}). More precisely, we set the bounding
box as the intersection of all bounding boxes defined by the skyline
subset found so far. After that, we get a list of the candidates in this
bounding box by using \rtree.  We sort the list in the ascending
order of the candidates' distances to a query point and process them one
by one in this order.  When we find a new skyline point, we reduce the
size of the bounding box by taking the intersection of the current
bounding box with the bounding box of this new skyline point.  During
the process, if some candidate point is not contained in the bounding
box, then we can simply skip the dominance test.

\section{Experiments}\label{sec:exp}

In this section, we report our experiment settings
(Section~\ref{subsec:expsetting}) and evaluation results to validate the
efficiency of our framework (Section \ref{subsec:effi}).  We compared our
algorithm for spatial skylining with \vstwo\ in several aspects.  As
datasets, we used both synthetic datasets and a real dataset of points of
interest (POI) in California.\footnote{Available at
http://www.cs.fsu.edu/\~{}lifeifei/SpatialDataset.htm}

\subsection{Experiment Settings}\label{subsec:expsetting}

{\bf Synthetic dataset:}
A synthetic dataset contains up to one million uniformly distributed random
locations in a 2D space. The space of datasets is limited to a unit space,
\ie, the upper and lower bound of all points are 0 and 1 for each dimension
respectively. More precisely, We used five synthetic datasets with 50K,
100K, 200K, 500K, and 1M uniformly distributed points.

Using synthetic datasets, we investigated the effect of the number of
points in a query $|Q|$, distribution of the points in a query $\sigma$,
and cardinalities of the datasets $|P|$.  Parameters used in the
experiments are summarized in Table~\ref{tab:param}.

\begin{table}[ht]\centering
    \caption{Parameters used for synthetic datasets}
    \begin{tabular*}{\linewidth}{p{.5\linewidth}p{.45\linewidth}}\hline\hline
        Parameter & Setting (\underline{Default}) \\ \hline
        Dimensionality & 2 \\
        Dataset cardinality & 50K, 100K, 200K, \underline{500K}, 1M \\
        Distribution of data points& Independent \\
        The number of points in a query & 5, 10, \underline{15}, 20, 40 \\
        Standard deviation of points in a query & 0.01, 0.02, 0.04, \underline{0.06}, 0.08 \\ \hline\hline
    \end{tabular*}
    \label{tab:param}
\end{table}

Queries were generated through the following steps: (1) We randomly
generate a \emph{center point} then (2) generate the query points,
normally distributed around the center. In particular,
for each dimension, we generate points that are normally distributed, 
with mean as the center point and deviation as user-specified parameter
$\sigma$, which varies among 0.01, 0.02, 0.04, 0.06, and 0.08 as listed in
Table~\ref{tab:param}.  We generated hundred queries (each consisting of up
to 40 query points) for each setting and measured average response times of
all algorithms.

{\bf POI dataset:}
We also validate our proposed framework using real-life dataset.
In particular, we use a POI dataset, which consists of 104,770 locations of
63 different categories in California.  Fig.~\ref{fig:poi} shows the
characteristics of this POI dataset.

\begin{figure}[h]\centering
	\epsfig{file=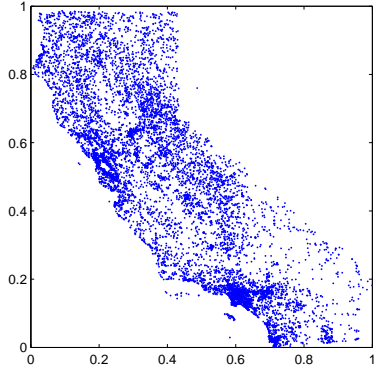, clip=, width=0.4\linewidth}
	\caption{10,000 sampled points from the California's POI dataset}
	\label{fig:poi}
\end{figure}

For this POI dataset, we investigated the effect of $|Q|$ and $\sigma$.
We similarly generated the queries, by randomly picking one data point as
a center point and generating query points to be normally distributed
around the center point, in the same way we generated synthetic points.
The reason why we pick the center point among data points, instead of
generating a random point, is to avoid generating queries to regions with
no data points (such as blank regions in Fig.~\ref{fig:poi}.  We generate
hundred queries for each setting, varying the number of query points in the
range from 5 to 40 and the standard deviation from 0.01 to 0.08, just as in
our synthetic data point generation.

We carry out our experiments on a Pentium IV PC running on Linux with
Pentium IV 3.2GHz CPU and 1GB memory, and all the algorithms were coded in C++.

\subsection{Efficiency}\label{subsec:effi}

We validate the efficiency of our framework, over varying
$|P|$, $|Q|$, and $\sigma$.

\begin{figure}[ht]\centering
	\begin{tabular}{ccc}
		\epsfig{file=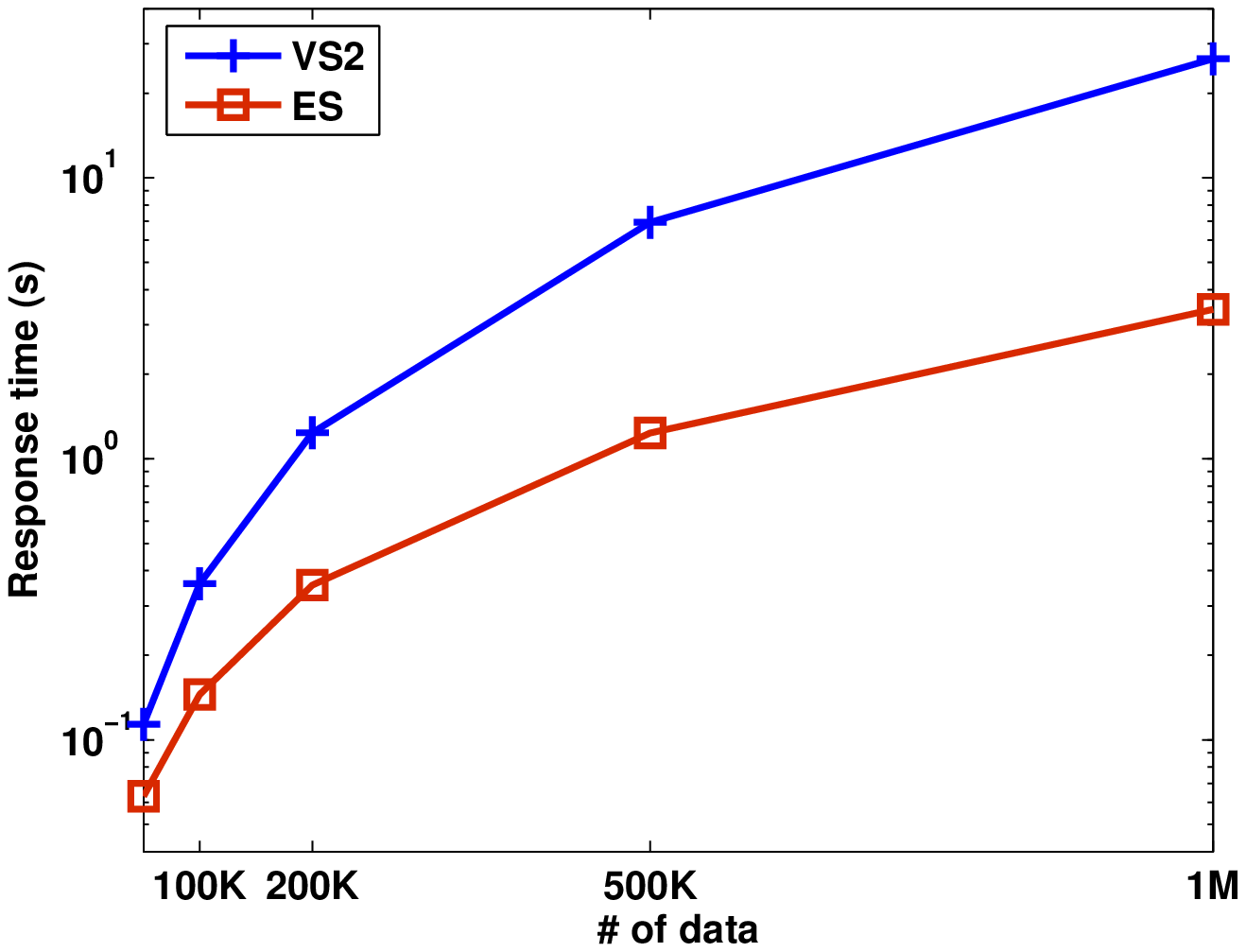, width=.32\linewidth, bb=96 265 480 565} &
		\epsfig{file=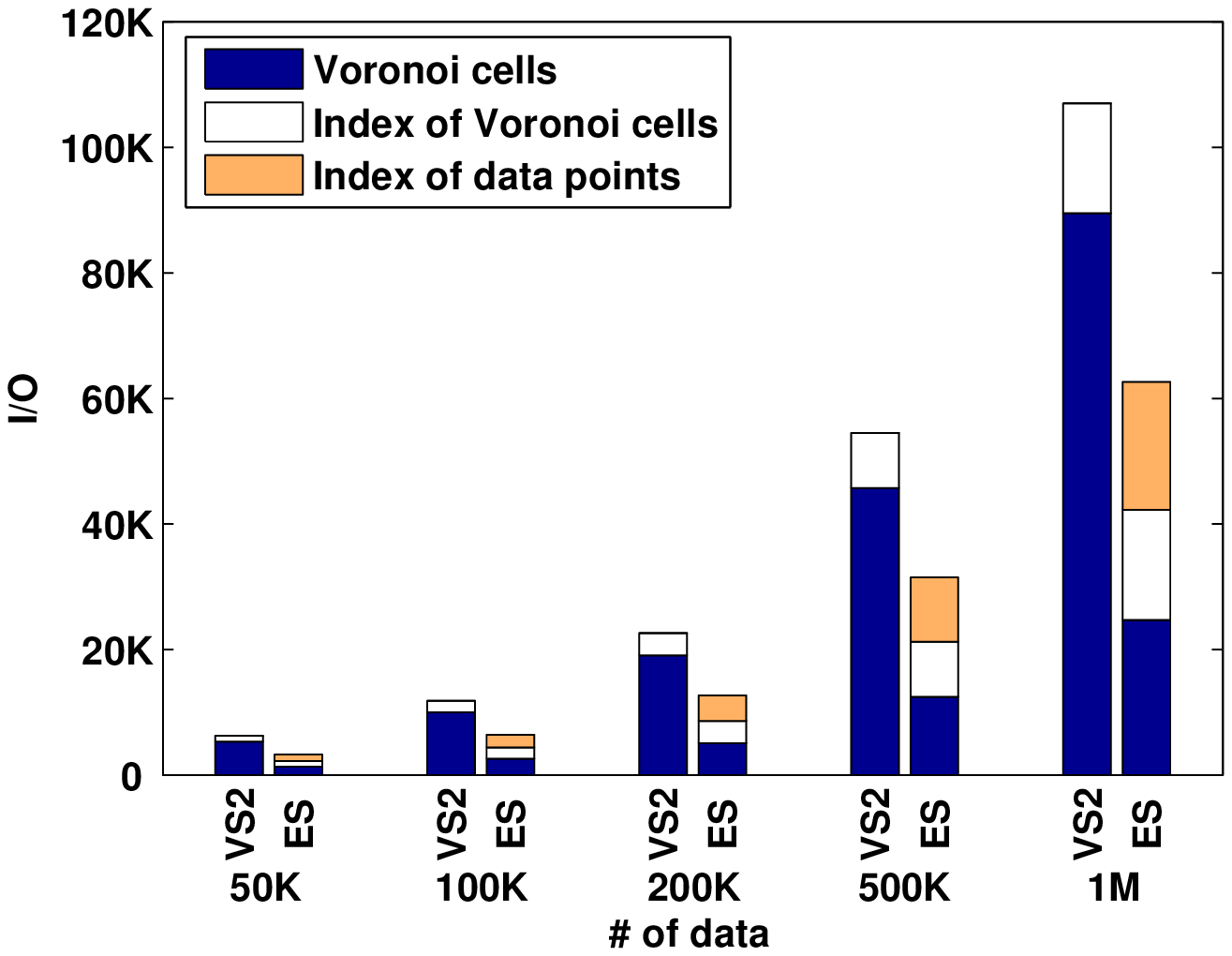, width=.32\linewidth, bb=96 265 480 565} &
		\epsfig{file=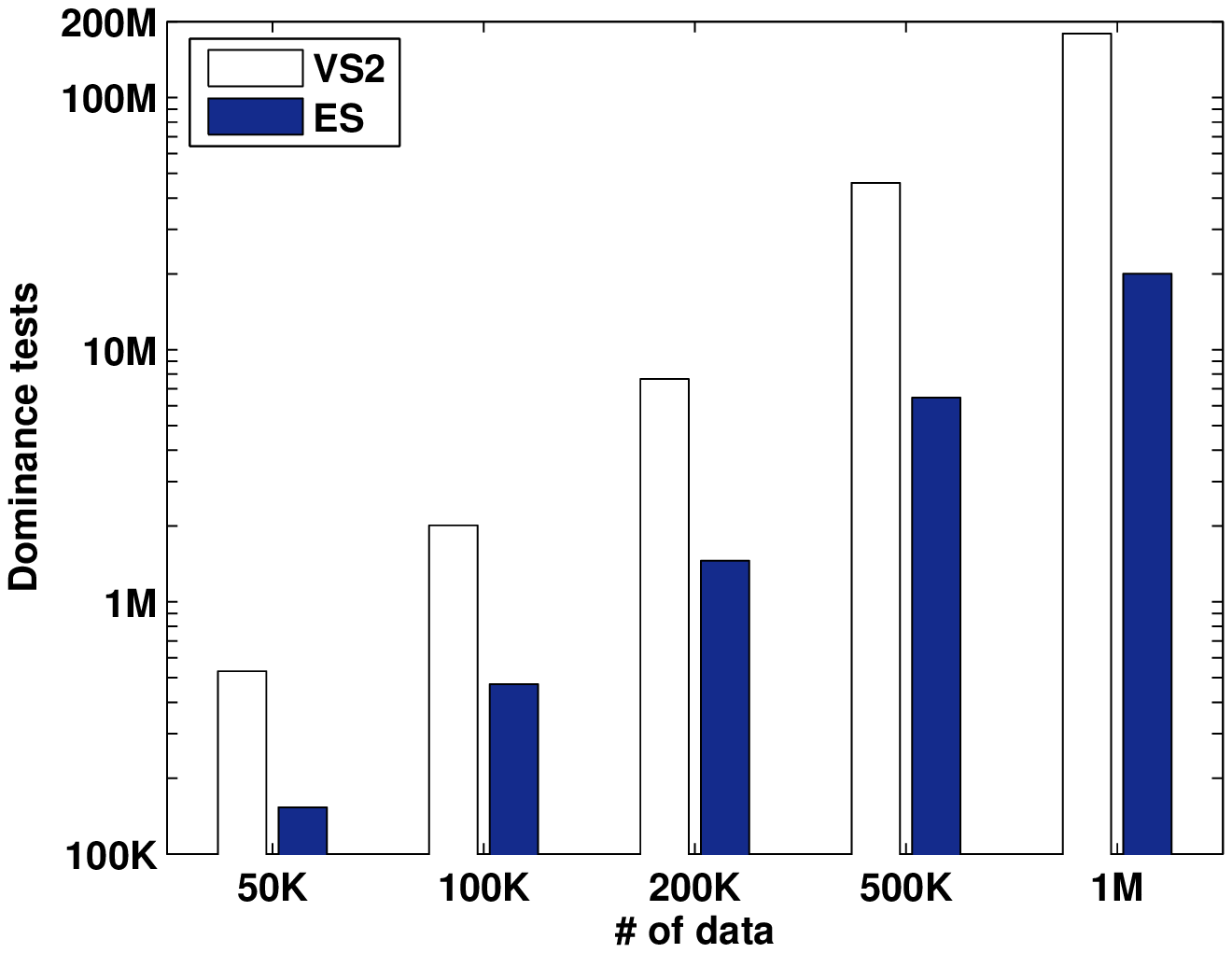, width=.32\linewidth, bb=96 265 480 565} \\
		(a) Response time &
		(b) I/O &
		(c) Dominance tests
	\end{tabular}
	\caption{Effect of the dataset cardinality for synthetic datasets}
	\label{fig:synth:datasize}
\end{figure}

Fig.~\ref{fig:synth:datasize} shows the effect of the dataset cardinality
to response time (Fig.~\ref{fig:synth:datasize}a), 
I/O cost, measured as the number of accessing (reading) Voronoi cells
and \rtree\ nodes, (Fig.~\ref{fig:synth:datasize}b), 
and the number of dominance tests (Fig.~\ref{fig:synth:datasize}c).

From Fig.~\ref{fig:synth:datasize}a, observe that our proposed algorithm
\esky\ outperforms \vstwo\ by an order of magnitude.  Similarly in
Fig.~\ref{fig:synth:datasize}c, \esky\ performs a remarkably smaller number
of dominance tests than \vstwo, by bypassing the dominance tests for the
skylines whose Voronoi cells intersect the boundary of $\CH{Q}$.  
Such saving is more significant between skylines, 
as the number of the dominance tests for
skylines is significantly higher.

Fig.~\ref{fig:synth:datasize}b shows the I/O costs of the three algorithms--
Observe that, three algorithms perform same number of I/Os on the index of
Voronoi cells, because each algorithm only uses the index to find a Voronoi
cell containing a query point.  To find non-seed skylines, \esky\ uses the
index of data points, which incurs less I/Os (random accesses) than \vstwo.
\esky, though the size of each I/O (\rtree\ node) is larger than that of
\vstwo\ (a Voronoi cell), outperforms \vstwo\ by reducing the ``number" of
I/Os, each of which incurs a random access, the cost of which dominates the
overall access cost, in our scenario of performing many random accesses of
smaller size.

\begin{figure}[ht]\centering
	\begin{tabular}{ccc}
		\epsfig{file=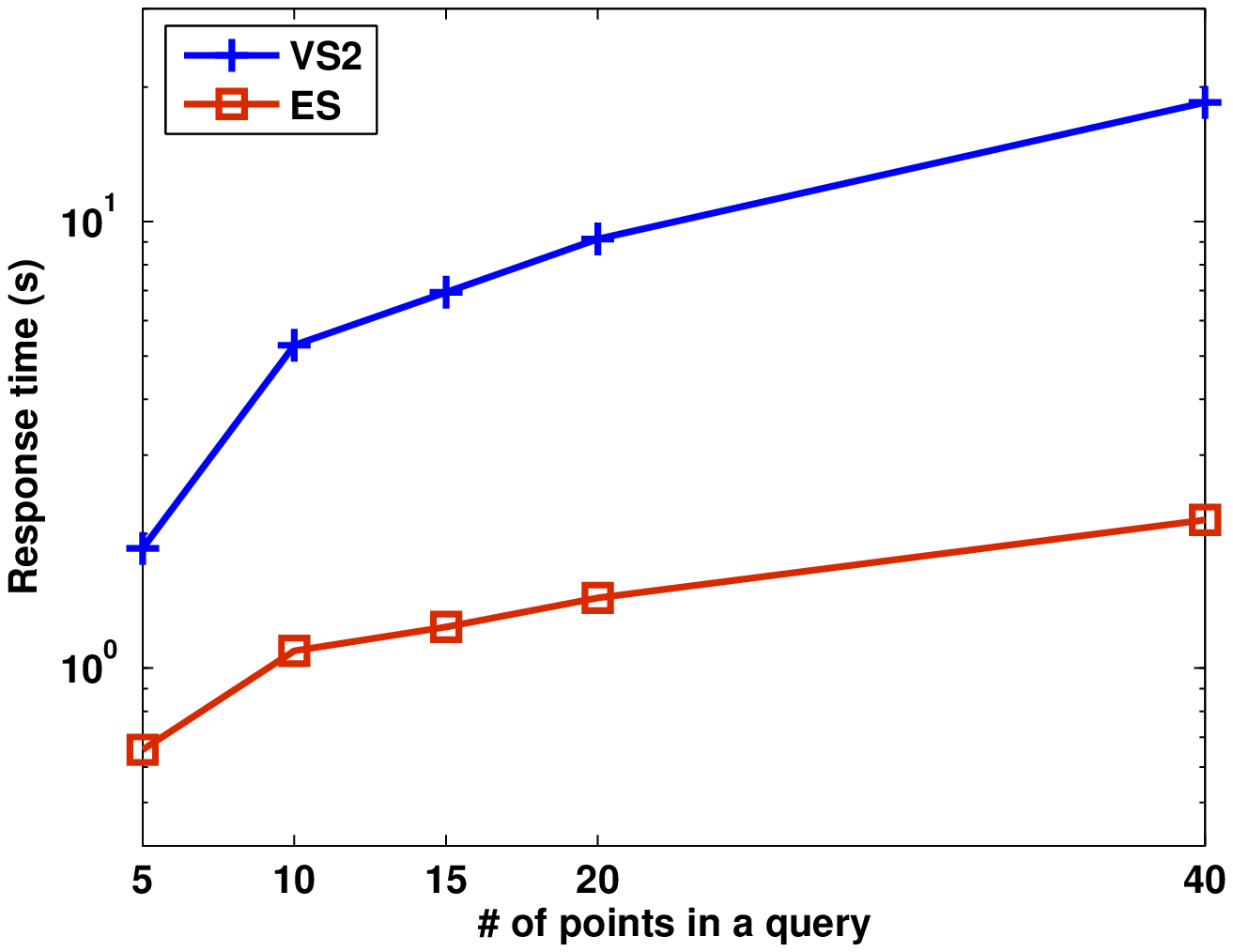, width=.32\linewidth, bb=96 265 480 565} &
		\epsfig{file=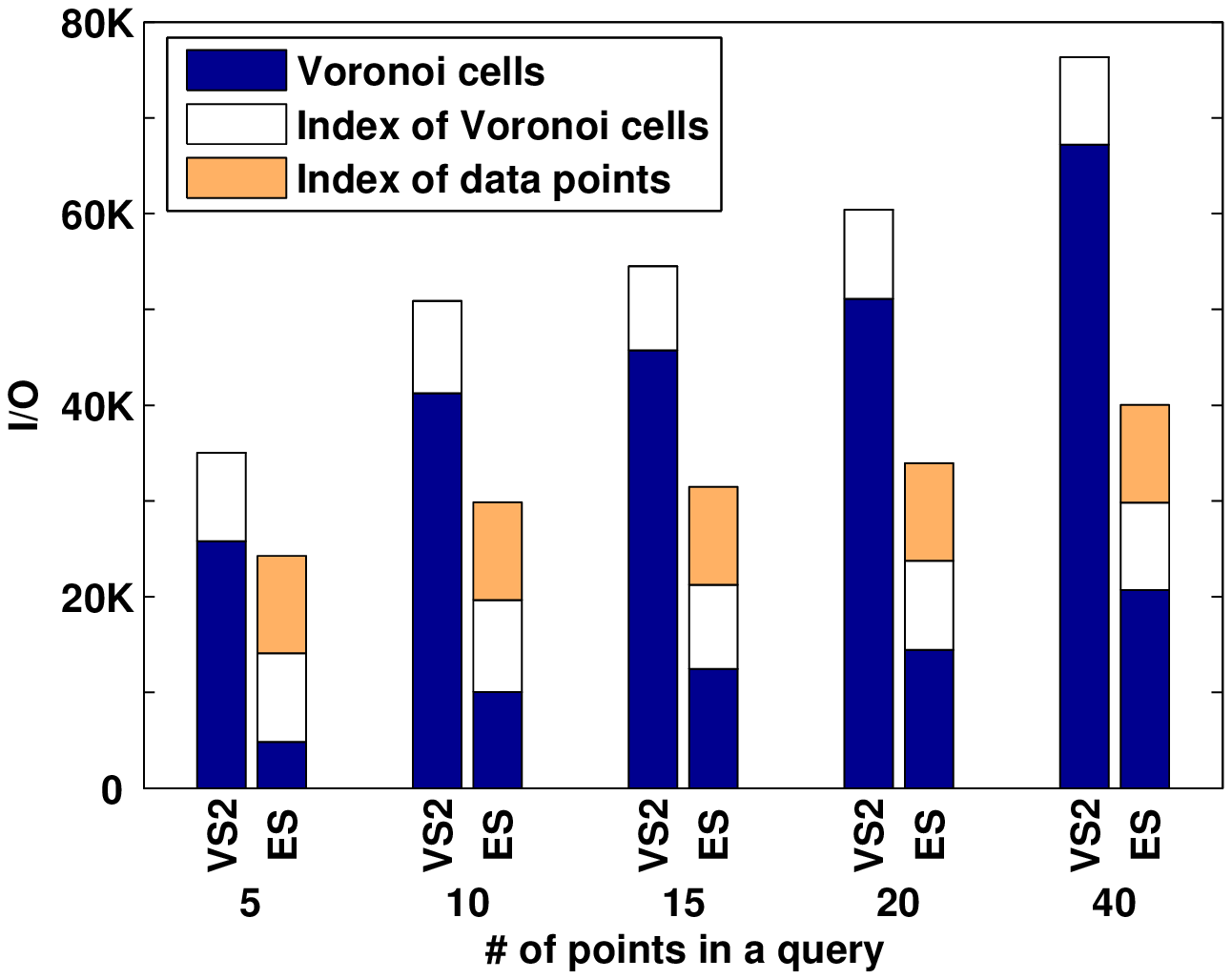, width=.32\linewidth, bb=96 265 480 565} &
		\epsfig{file=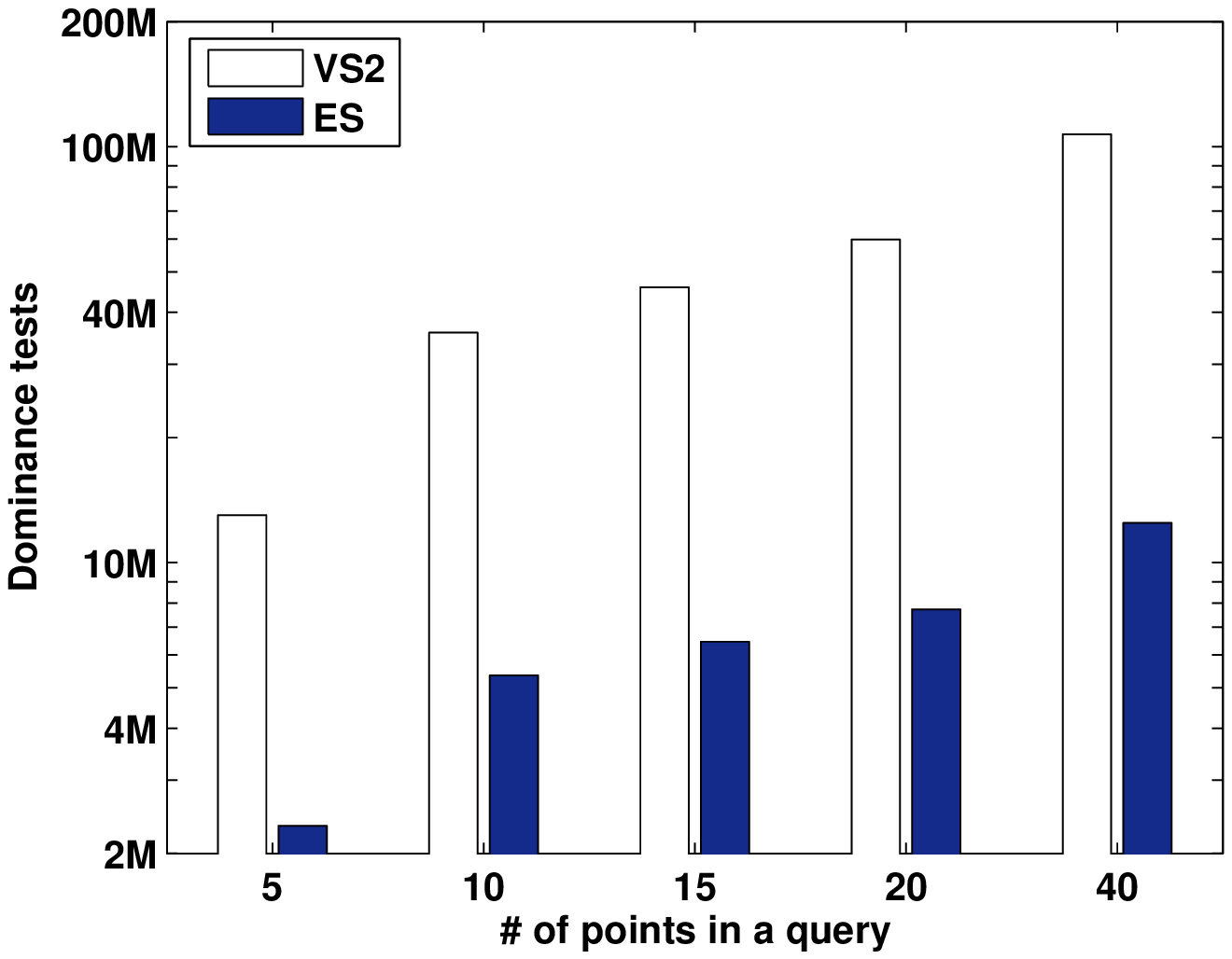, width=.32\linewidth, bb=96 265 480 565} \\
		{(a) Response time} &
		{(b) I/O} &
		{(c) Dominance tests}
	\end{tabular}
	\caption{Effect of the number of query points for synthetic datasets}
	\label{fig:synth:querysize}
\end{figure}

Fig.~\ref{fig:synth:querysize} shows the effect of $|Q|$ to response time,
I/O cost, and the number of dominance tests.  We observe similar trends as
in Fig.~\ref{fig:synth:datasize}, except that the response time and I/Os
scale more gracefully over increasing $|Q|$.  This can be explained by the
fact that all the three algorithms use $\CH{Q}$, instead of using $Q$
itself, the size of which grows much slowly  than that of $Q$.  For
instance, even when $|Q|$ is doubled, the size of convex hull may not
change much, if the deviation $\sigma$ stays the same.

\begin{figure}[ht]\centering
	\begin{tabular}{ccc}
		\epsfig{file=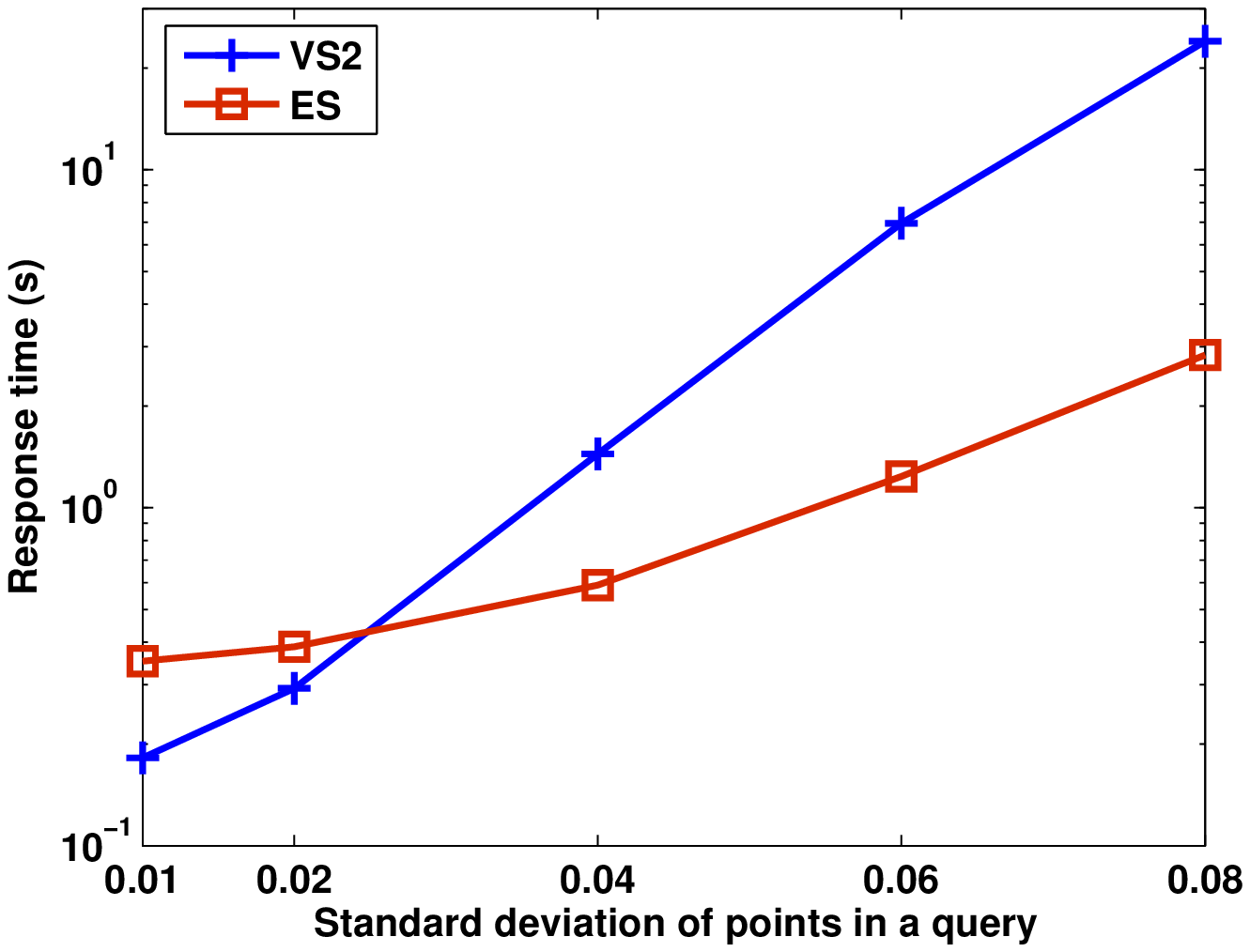, width=.32\linewidth, bb=96 265 480 565} &
		\epsfig{file=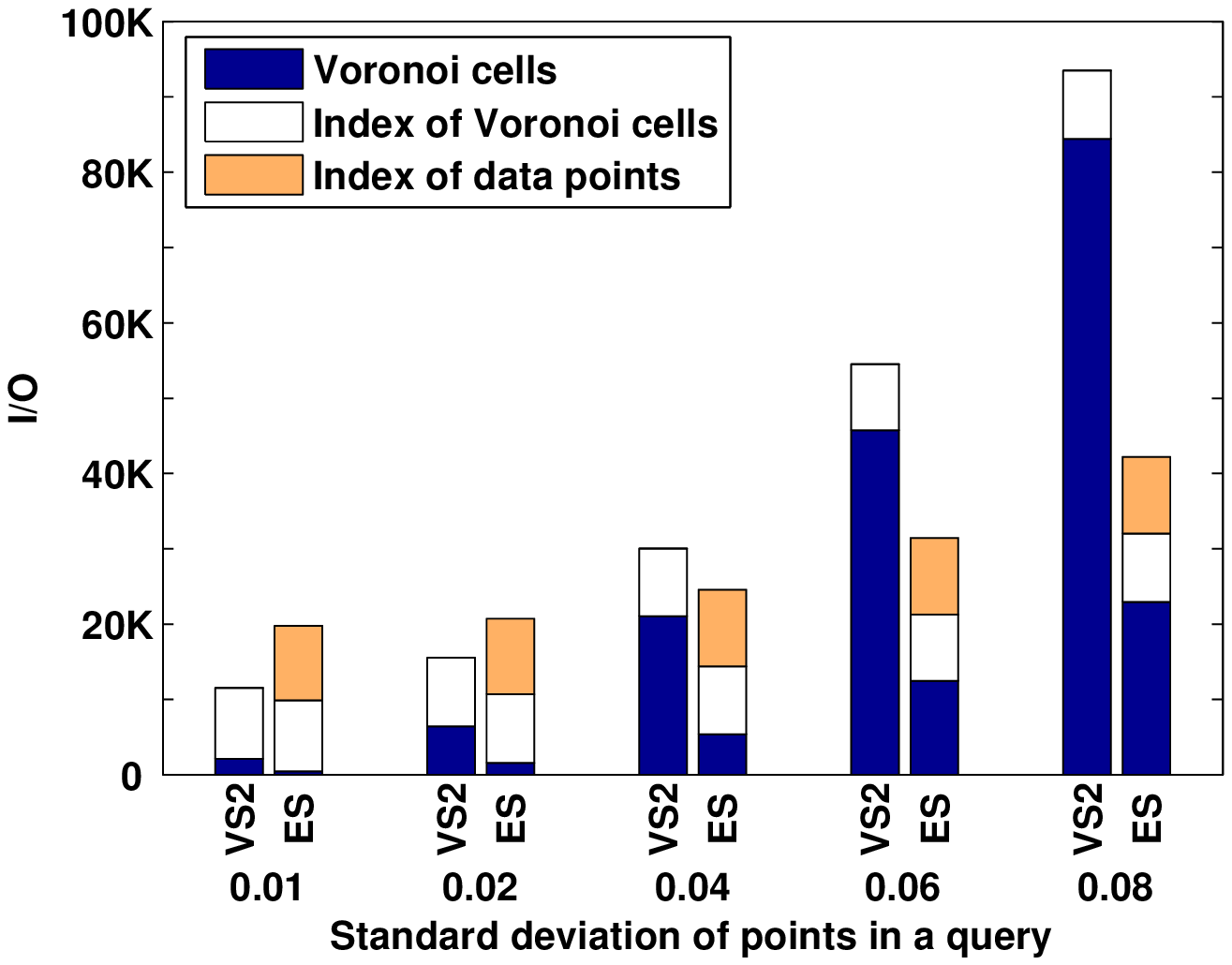, width=.32\linewidth, bb=96 265 480 565} &
		\epsfig{file=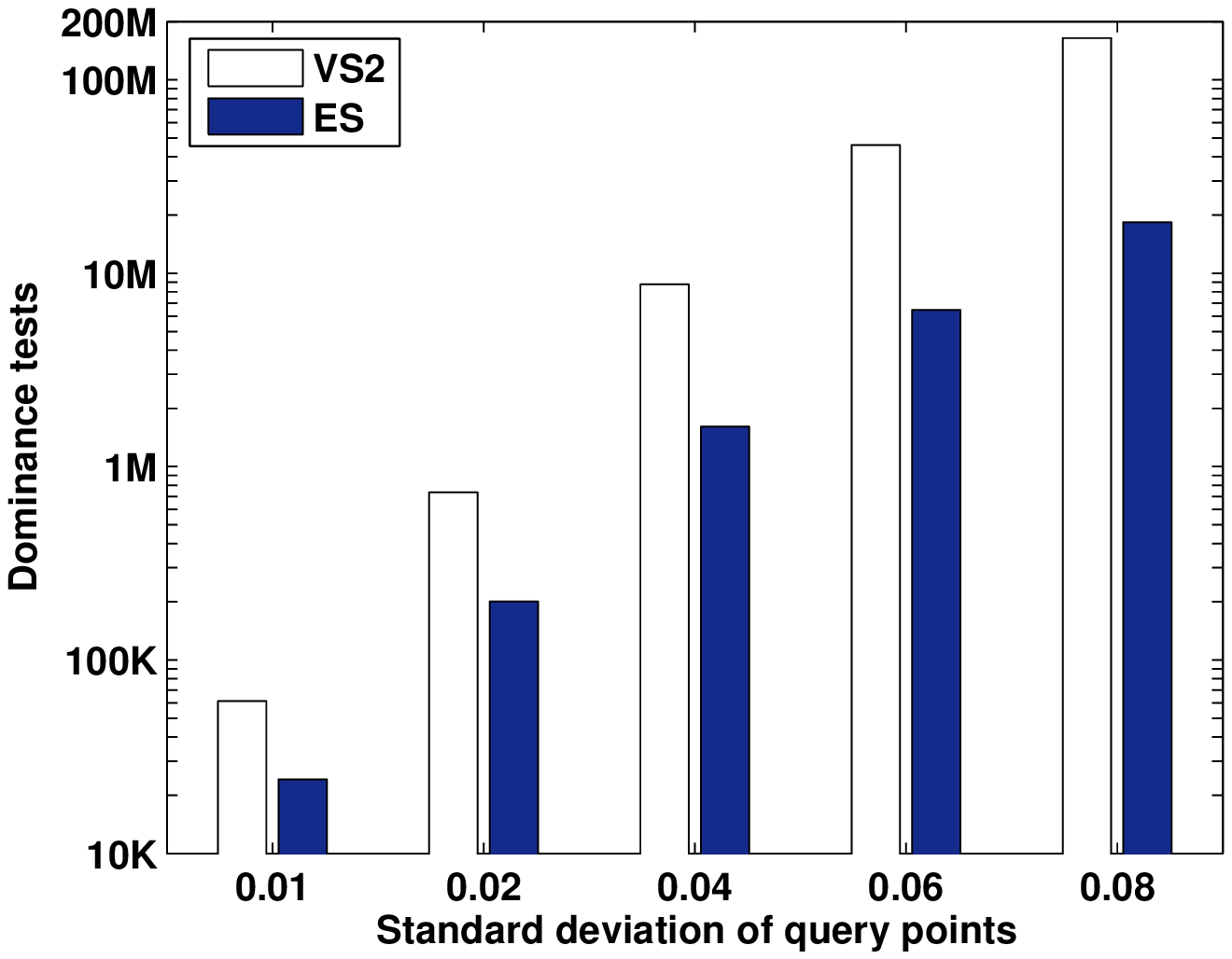, width=.32\linewidth, bb=96 265 480 565} \\
		{(a) Response time} &
		{(b) I/O} &
		{(c) Dominance tests}
	\end{tabular}
	\caption{Effect of $\sigma$ of a query for synthetic datasets}
	\label{fig:synth:sigma}
\end{figure}

Fig.~\ref{fig:synth:sigma} shows the effect of $\sigma$.  Similarly to
prior results, \esky\ significantly outperforms \vstwo\ in terms of
response time, dominance tests, and I/Os while \vstwo\ outperforms our
algorithm when query points are crowded in a very small area.  This
phenomenon can be explained as \esky\ performs more I/Os than \vstwo\ when
the size of $\CH{Q}$ is very small (Fig.~\ref{fig:synth:sigma}b).  However,
\esky\ starts to outperform \vstwo\ as the size of $\CH{Q}$ grows.

The other slight difference to note is that the response times of the
algorithms increase relatively faster as $\sigma$ increases, as the size of
$\CH{Q}$ may increase quadratically as $\sigma$ increases.  For example,
when $\sigma$ changes from $0.04$ to $0.08$ (two-fold), the circle area
containing the points within the 95\% confidence interval increases
four-fold (\ie, quadratic), and also the area of $\CH{Q}$ and the points
inside $\CH{Q}$.  As such points are guaranteed to be skylines, this
observation suggests why the number of skylines increases quadratically as
$\sigma$ increases.

\begin{figure}[ht]\centering
	\begin{tabular}{ccc}
		\epsfig{file=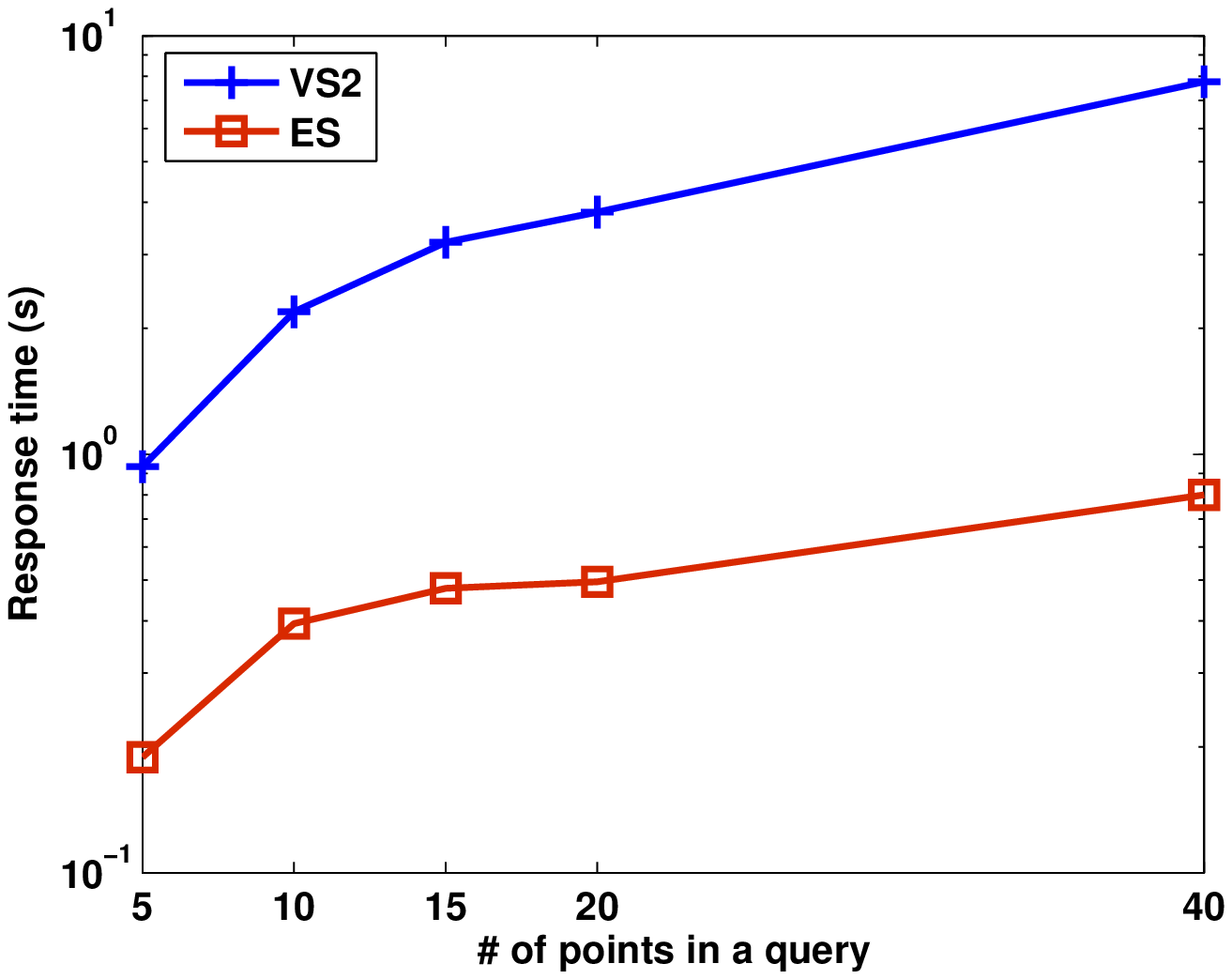, width=.32\linewidth, bb=96 265 480 565} &
		\epsfig{file=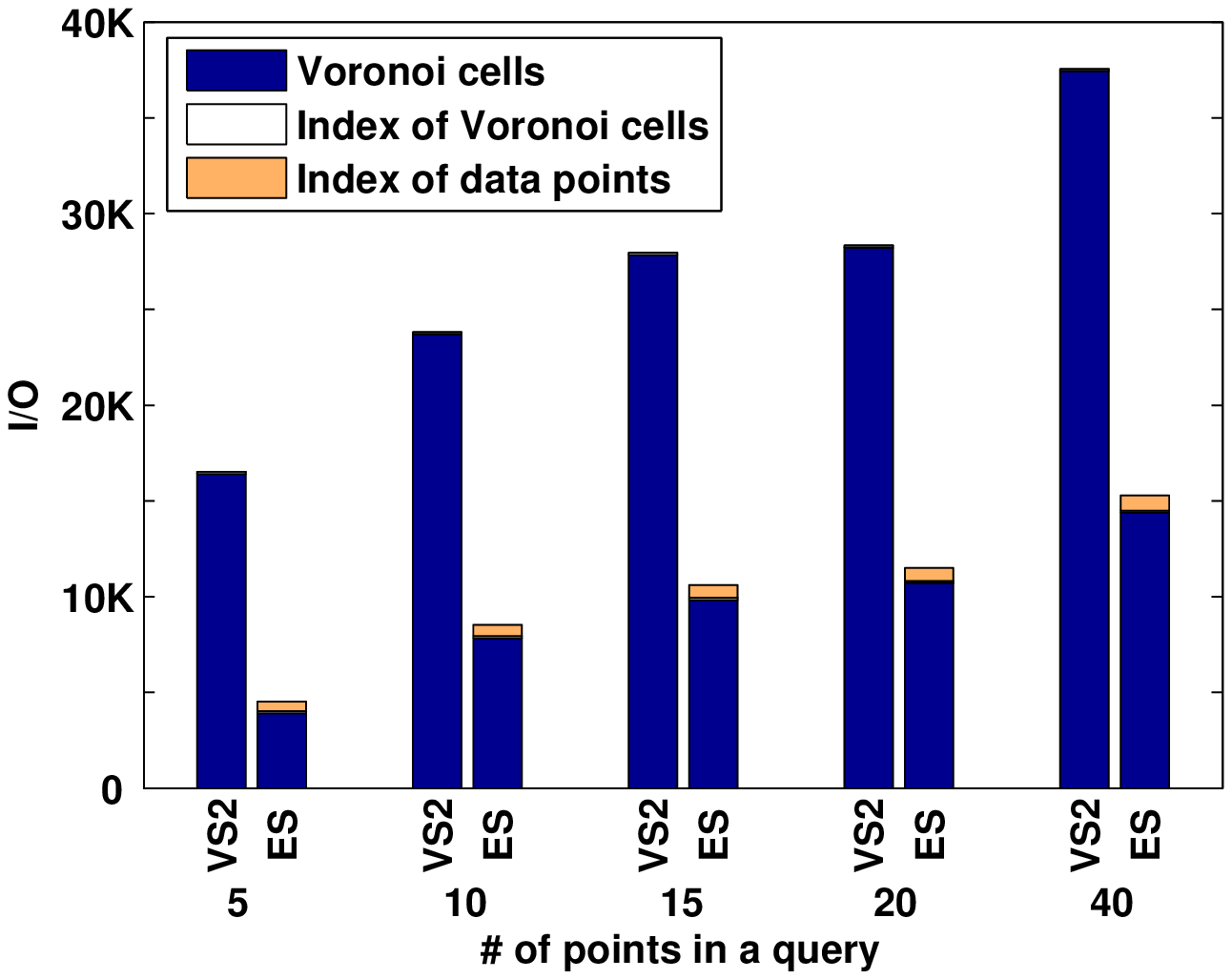, width=.32\linewidth, bb=96 265 480 565} &
		\epsfig{file=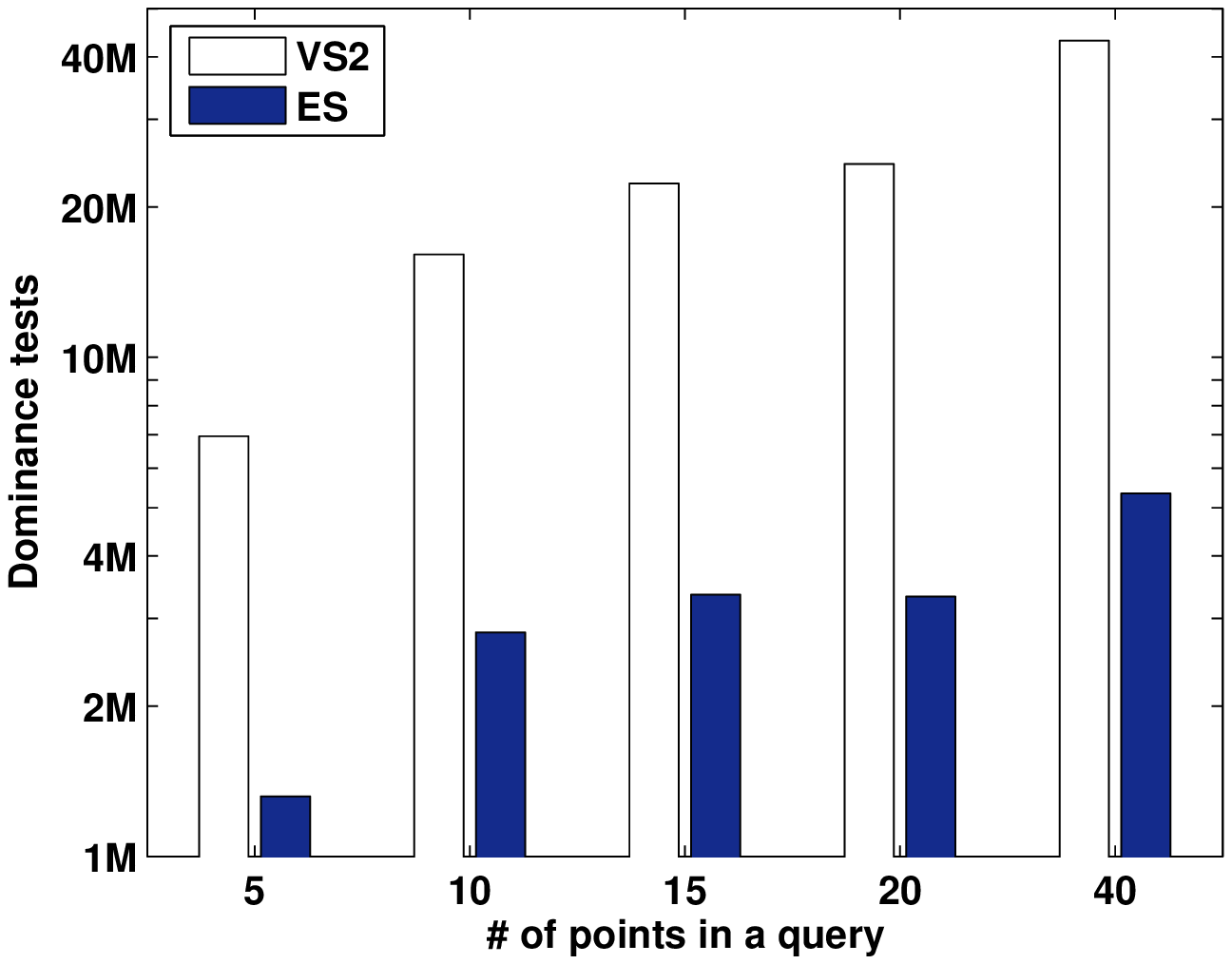, width=.32\linewidth, bb=96 265 480 565} \\
		{(a) Response time} &
		{(b) I/O} &
		{(c) Dominance tests}
	\end{tabular}
	\caption{Effect of the number of query points for the POI dataset}
	\label{fig:real:querysize}
\end{figure}

\begin{figure}[ht]\centering
	\begin{tabular}{ccc}
		\epsfig{file=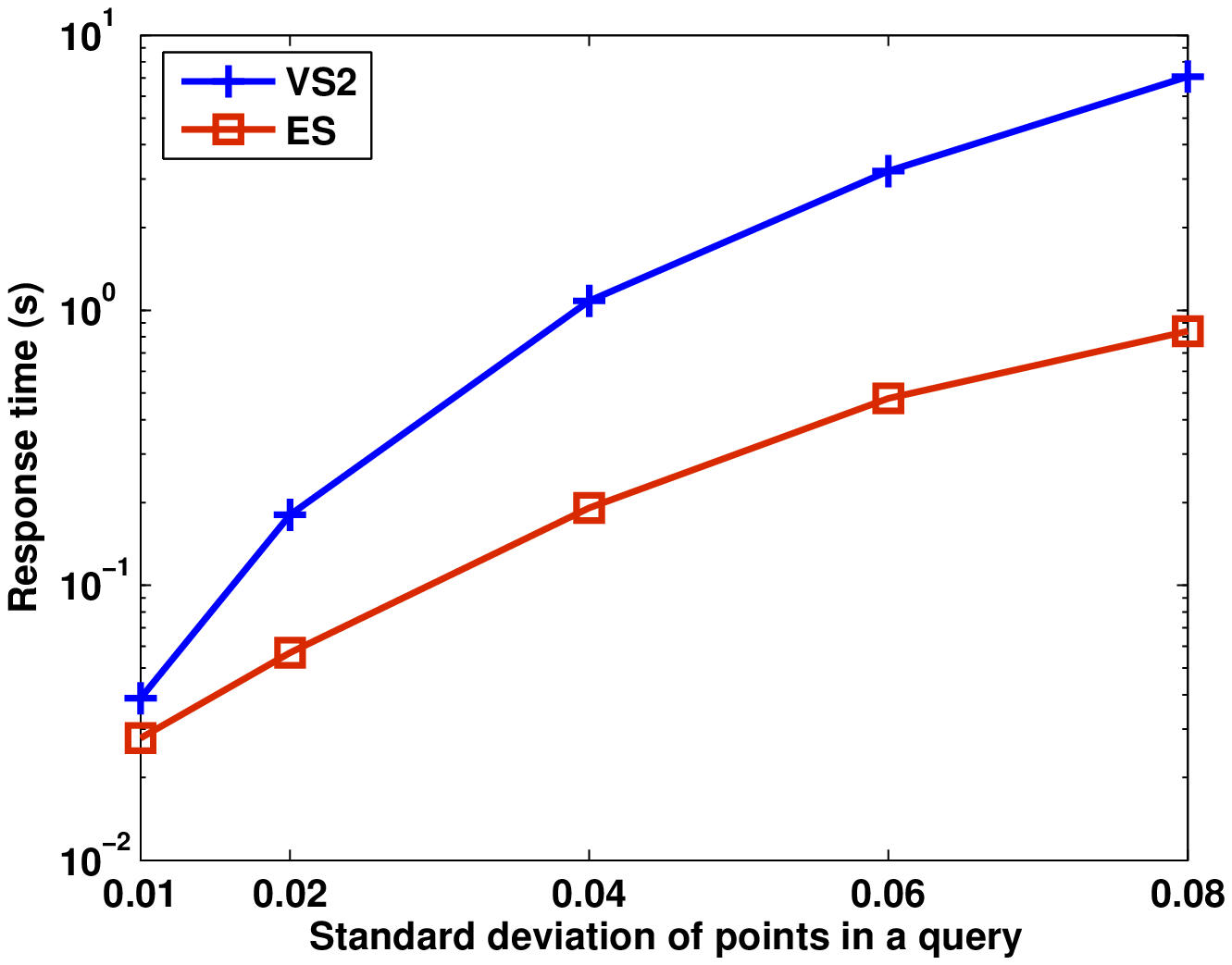, width=.32\linewidth, bb=96 265 480 565} &
		\epsfig{file=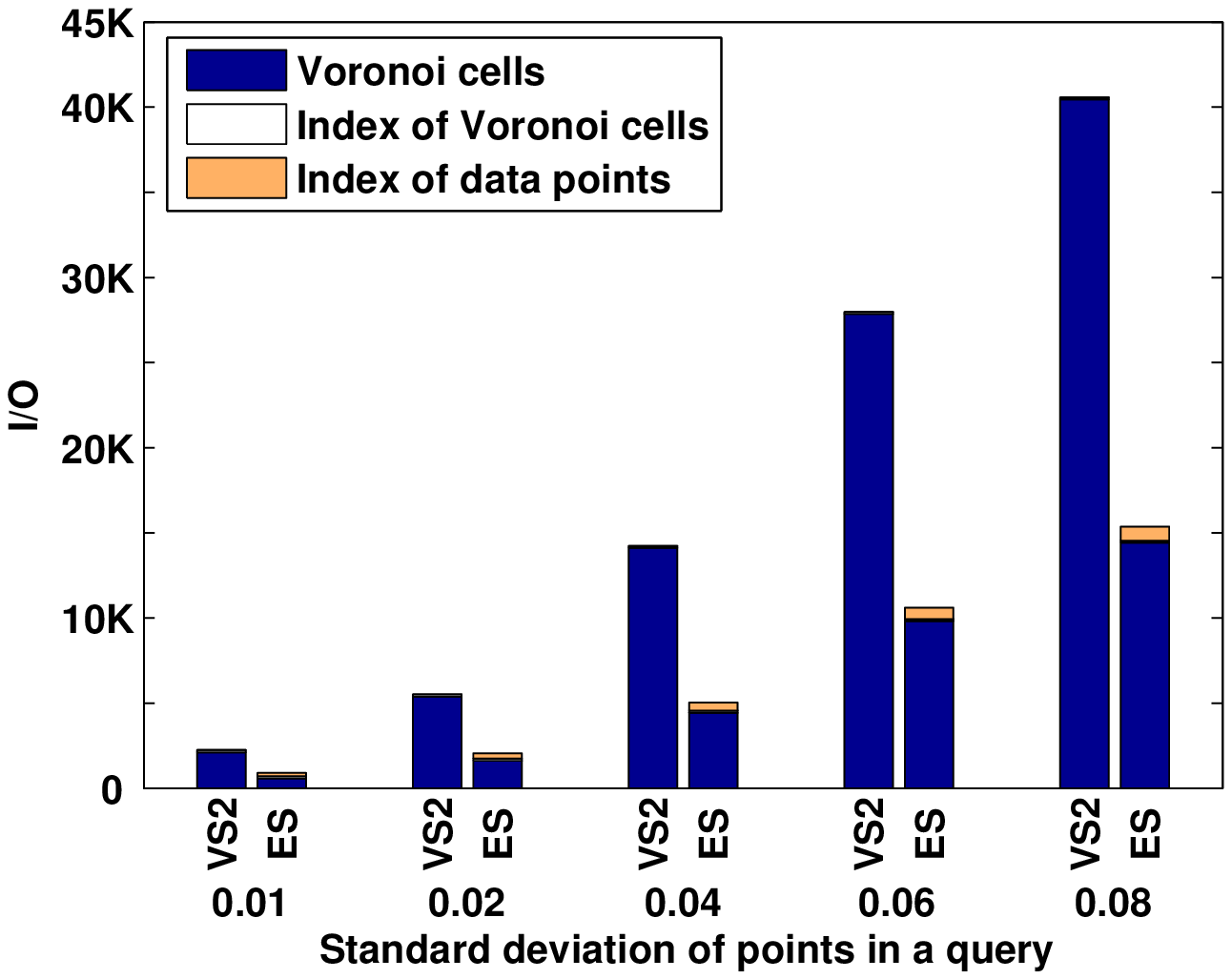, width=.32\linewidth, bb=96 265 480 565} &
		\epsfig{file=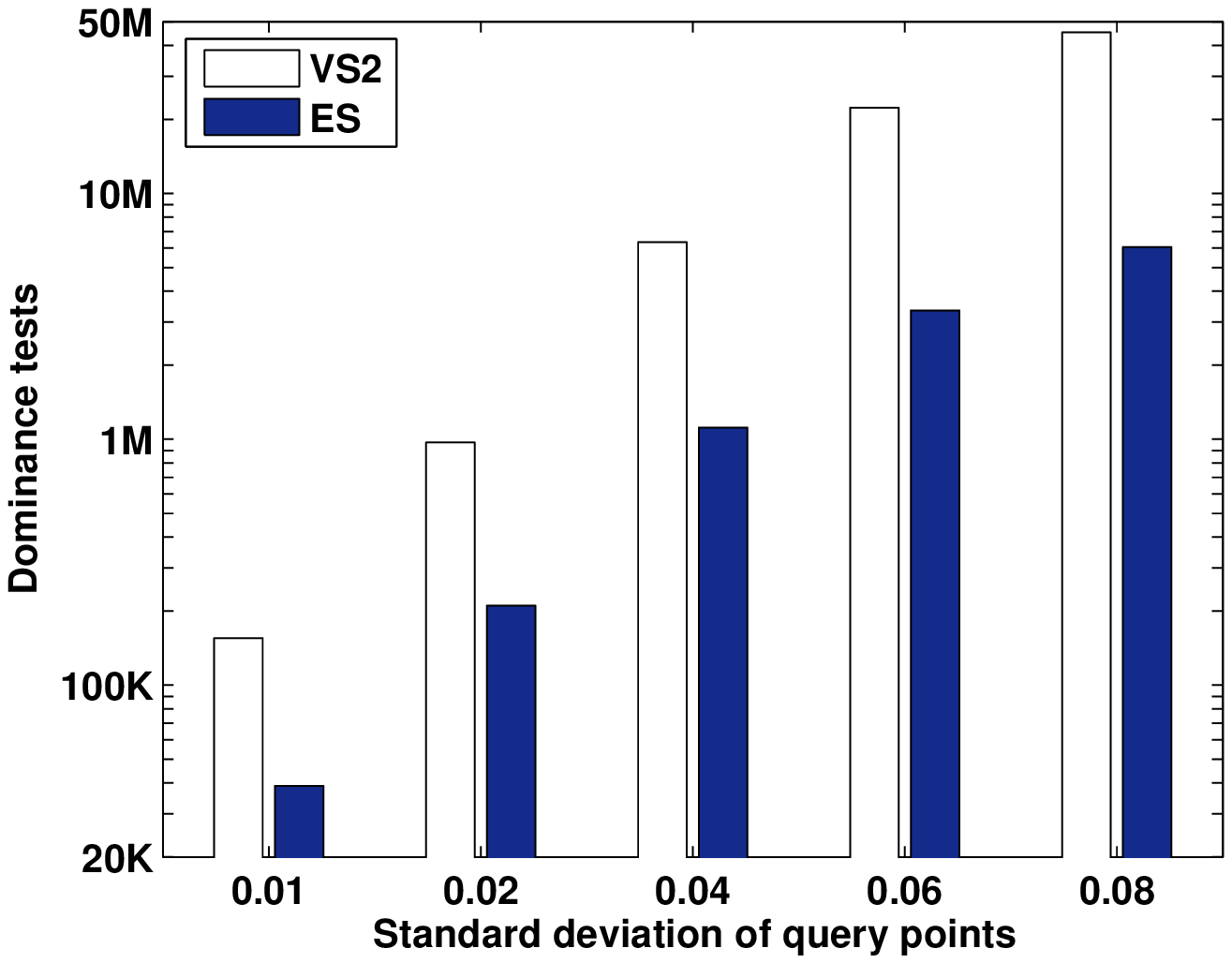, width=.32\linewidth, bb=96 265 480 565} \\
		{(a) Response time} &
		{(b) I/O} &
		{(c) Dominance tests}
	\end{tabular}
	\caption{Effect of $\sigma$ of a query for the POI dataset}
	\label{fig:real:sigma}
\end{figure}

We perform the same sets of experiments on the POI dataset, varying the
size of query and $\sigma$, reported in Fig.~\ref{fig:real:querysize} and
\ref{fig:real:sigma} respectively.  Our observations of these evaluations
are roughly consistent with the corresponding evaluation for synthetic
datasets.  However, in these experiments, I/Os on Voronoi cells are
dominant parts of the I/O cost.  The reason is that, as the cardinality of
the dataset is relatively smaller, the depth of the \rtree\ is also small,
thus incurring less index I/Os.  A similar phenomenon can be observed in
Fig.~\ref{fig:synth:datasize}b, when the dataset cardinality is small (50K).

\section{Conclusion}\label{sec:con}
We have studied spatial skyline query processing and presented an efficient
and correct
exact algorithm. We showed that our algorithm
can identify the correct result in $O(|P|(|S|\log|\CH{Q}|+\log|P|))$
time, while the best known algorithm may fail to compute the correct
result. %We also developed an approximation algorithm.  
Lastly, we empirically validated our proposed algorithm.

So far we have assumed that the points lie in $2$-dimensional space, and
shown how to efficiently retrieve spatial skyline points using some
geometric structures such as the convex hull and the Voronoi diagram of
points in the plane.  We now turn our attention to higher dimensional
skyline queries. All the definitions, lemmas, and algorithms described in
this paper generalize to higher dimensions: For the set of $n$ points in
$d$-dimensional space, the Voronoi diagram of them has $\Theta(n^{\lceil
d/2\rceil})$ combinatorial complexity~\cite{Klee80} and can be computed
in $O(n\log n+n^{\lceil d/2\rceil})$
time~\cite{Chazelle91,Clarkson89,Seidel91}. The convex hull of those
points has $\Theta(n^{\lfloor d/2\rfloor})$ combinatorial complexity (by
the so-called \emph{Upper Bound Theorem}) and can be computed in
$\Theta(n^{\lfloor d/2\rfloor})$ expected time~\cite{CG}. The dominance
test, the intersection query of a line with a convex polygon used in
Section~\ref{subsubsec:efficient_check}, can be generalized for higher
dimensions, as intersection query of a hyperplane with a convex
polyhedron in higher dimensions. Similarly, the intersection of an edge
with the Voronoi diagram can also be generalized as the intersection of a
$d-1$-face with the Voronoi diagram in $d$-dimensional space.

For future work, we will study how our algorithms can be extended to
support queries over urban road networks with additional constraints.

\bibliographystyle{splncs}
\bibliography{ssq}

\end{document}